\newtheorem{theorem}{Theorem}
\newtheorem{lemma}{Lemma}
\newtheorem{corollary}{Corollary}
\newtheorem{definition}{Definition}
\begin{document}

\title{Community Detection in the \\ Multi-View Stochastic Block Model}

\author{Yexin Zhang, Zhongtian Ma, Qiaosheng Zhang, Zhen Wang, Xuelong Li
\thanks{
Yexin Zhang and Zhongtian Ma are with the School of Cybersecurity and also with the School of Artificial Intelligence, OPtics, and ElectroNics (iOPEN), Northwestern Polytechnical University, Xi’an Shaanxi, 710072, P.R.China (e-mail: yexin.zhang@mail.nwpu.edu.cn, mazhongtian@mail.nwpu.edu.cn).

Qiaosheng Zhang is with the Shanghai Artificial Intelligence Laboratory, Shanghai 200232, P.R.China (e-mail: zhangqiaosheng@pjlab.org.cn).

Zhen Wang and Xuelong Li are with the Key Laboratory of Intelligent Interaction and Applications, Ministry of Industry and Information Technology, and the School of iOPEN, Northwestern Polytechnical University, Xi’an Shaanxi, 710072, P.R.China (email: w-zhen@nwpu.edu.cn, li@nwpu.edu.cn). Corresponding authors: Zhen Wang and Qiaosheng Zhang.

}
}


\markboth{ }%
{Shell \MakeLowercase{\textit{et al.}}: A Sample Article Using IEEEtran.cls for IEEE Journals}


\maketitle

\begin{abstract}
This paper considers the problem of community detection on multiple potentially correlated graphs from an information-theoretical perspective. We first put forth a random graph model, called the multi-view stochastic block model (MVSBM), designed to generate correlated graphs on the same set of nodes (with cardinality $n$). The $n$ nodes are partitioned into two disjoint communities of equal size. The presence or absence of edges in the graphs for each pair of nodes depends on whether the two nodes belong to the same community or not. The objective for the learner is to recover the hidden communities with observed graphs. Our technical contributions are two-fold: (i) We establish an information-theoretic upper bound (Theorem~1) showing that exact recovery of community is achievable when the model parameters of MVSBM exceed a certain threshold. (ii) Conversely, we derive an information-theoretic lower bound (Theorem~2) showing that when the model parameters of MVSBM fall below the aforementioned threshold, then for any estimator, the expected number of misclassified nodes will always be greater than one. Our results for the MVSBM recover several prior results for community detection in the standard SBM as well as in multiple independent SBMs as special cases.  
\end{abstract}

\begin{IEEEkeywords}
Community detection, multi-view data, stochastic block model, exact recovery, information-theoretic limit.
\end{IEEEkeywords}

\section{Introduction}
\IEEEPARstart{C}{ommunity} detection stands as a foundational and focal point of research in the realm of network science\cite{newman2003structure}. The intricate web of relationships, whether they pertain to social networks, biological systems, or information propagation, often exhibits underlying structures composed of cohesive groups of nodes or individuals\cite{gao2023multilayer,Lelarge2015Labelled}. Identifying and characterizing these communities within networks not only enriches our understanding of complex systems but also holds practical applications in fields as diverse as recommendation systems\cite{ahn2018binary,zhang2021community,zhang2022mc2g}, biology\cite{sporns2014contributions}, and  sociology\cite{newman2004detecting}.

One prominent approach to community detection is the application of the Stochastic Block Model (SBM)\cite{HOLLAND1983109}, a probabilistic generative model that reflects the underlying structure of networks. Specifically, SBMs assume the nodes are separated into several communities, and the connections between nodes are determined by the community assignments, which indicate the communities to which the two nodes belong\cite{abbe2015exact,abbe2015community}.  
The parameters of SBM include the number of communities, the size of each community, and the probabilities of connections between nodes. Nodes within the same community typically have a higher probability of being connected compared to nodes in different communities. The investigation of SBMs not only brings valuable theoretical insights but also plays a crucial role in the design of algorithms and their practical applications.

In recent years, efforts have been made to enhance the understanding of community detection and SBMs\cite{abbe2015community, chin2015stochastic, gao2017achieving,saad2018community,abbe2020entrywise,esmaeili2021semidefinite,esmaeili2022community, zhang2022exact,chien2019minimax,sima2021exact}. 
However, most prior works on SBMs only consider performing community detection on a \emph{single graph}.
In practice, data relationships do not exist in isolation but can be analyzed and interpreted from \emph{multiple views}, each offering a unique perspective and complementary information. For instance, multi-view graphs are used in multi-lingual information retrieval to address data sparsity and language discrepancies. In the context of community detection, each text is viewed as a graph with the information points in texts serving as nodes of the graph, and the connections between information points form the edges of the multi-view graphs. This allows for cross-lingual information fusion and improved retrieval performance. Multi-view data is also widely used in computer vision, including group behavior analysis\cite{li2018multiview} and pedestrian detection\cite{tu2020voxelpose}. With an increase in the number of camera setups, multi-view images and multi-view feature similarity maps are essential for comprehensive data analysis, especially in crowded scenes. In the context of community detection, images from different cameras with varying degrees are considered as multi-view graphs, where a single pedestrian is a node, for instance, in group behavior analysis, while a group of pedestrians is regarded as a community. Additionally, the connection between nodes can be modeled by social attributes such as movement direction or gatherings.

Motivated by these applications, in this paper, we investigate the use of multi-view graphs and introduce a new random graph model for community detection, referred to as the Multi-view SBM (MVSBM). The MVSBM randomly generates $D \in \mathbb{N}^+$ potentially correlated graphs simultaneously. These graphs share the same set of nodes (of cardinality $n \in \mathbb{N}^+$), and these nodes are assumed to belong to two balanced communities (of equal size)\footnote{As an initial effort to study community detection on multiple correlated graphs, we mainly focus on the most basic setting where there are two balanced communities. This is a common and representative setting in the SBM literature and has been extensively studied in prior works~\cite{abbe2015exact,mossel2015consistency,saad2018community,abbe2020entrywise,esmaeili2021semidefinite,esmaeili2022community,sima2021exact,jog2015information}. In future work, it is worthwhile extending our investigation to more general settings with multiple communities of non-equal sizes.}. For each pair of nodes, the probability of them being connected in one graph depends on whether they belong to the same community or not (like the standard SBM) and also depends on their connection in other graphs (since the $D$ graphs are correlated). We direct the readers to Section~\ref{sec:MVSBM} for a  comprehensive description of the MVSBM. In essence, the MVSBM encompasses diverse perspectives and features from multi-views to enhance community detection analysis, address data heterogeneity, capture complementary information, improve robustness, and enable a holistic understanding of the data.
Technically, we investigate the fundamental constraints of community detection in the MVSBM from an information-theoretic perspective, by establishing upper and lower bounds on the task of recovering communities.


\subsection{Main contributions}

The paper's main contributions can be summarized as follows.
\begin{itemize}
    \item Firstly, we introduce a new random graph model, called the \emph{multi-view stochastic block model} (MVSBM), that can generate multiple graphs on the same set of nodes. Each graph represents the relationships among these nodes from a distinct view. Different graphs are generated in a correlated manner.  
    The MVSBM offers a theoretical framework for investigating the community detection problem from multiple graphs/views, thereby providing valuable insights for practical applications.

    \item We establish an information-theoretic upper bound for community detection in the MVSBM (Theorem 1). Specifically, we focus on the task of \emph{exact recovery} of communities (defined in Definition~\ref{def:exact}), and show that when the model parameters of the MVSBM satisfy the condition in Theorem~\ref{T1}, then exact recovery is achievable by using the maximum likelihood estimator.
    
    \item We also derive an information-theoretic lower bound that serves as an impossibility result for \emph{any} estimator (Theorem 2). Specifically, we show that when the model parameters of the MVSBM fail to meet the condition in Theorem~\ref{T1}, then for any estimator, the expected number of misclassified nodes will always be greater than one. 
    
    \item  
Comparing Theorem~\ref{T1} with Theorem~\ref{T2}, we observe the presence of a \emph{sharp threshold} in the MVSBM (as indicated by Eqn. (10)), such that exact recovery of communities is possible when above the threshold, while all estimators must misclassify at least one node when below the threshold.
Furthermore, when specializing the MVSBM to several special cases (such as when the multiple graphs are identical), we observe that our results coincide with the previous results derived for these special cases.


\end{itemize}
\subsection{Related works}

Since our work focuses on the dovetail between multi-view graphs and community detection, we will highlight the most pertinent related works on these two topics in this subsection.

There is extensive literature on identifying hidden community structures in networks, especially on the celebrated SBMs. Abbé's survey\cite{abbe2017community} provides an overview of community detection in the SBM, including the threshold for exact community recovery in the SBM\cite{abbe2015exact,mossel2015consistency,mossel2015reconstruction}. We further develop and enhance these analyses, particularly focusing on integrating complementary information and enhancing the resilience of community detection across multi-view graphs.

Some works consider the diversity of connections or relationships between nodes and model them as \emph{multilayer networks}, where each type of connection is treated as a network layer~\cite{kivela2014multilayer}. Therefore, it has given rise to several complementary models for multi-layer SBMs~\cite{ma2023community,chen2022global,stanley2016clustering,lei2023computational,chatterjee2022clustering}. Based on the underlying community structure, a collection of graphs are generated on the same set of nodes with identical latent community labels. Multi-layer networks tend to capture the dynamic changes or hierarchical structures of networks, such as the evolving community structure in social networks over time\cite{yang2011detecting} or the multi-level modular organization in biological networks~\cite{gosak2018network}.  Several variants have been explored, but typically, given the community labels, the layers are \emph{conditionally independent}~\cite{ma2023community,de2017community,stanley2016clustering,lei2023computational,chatterjee2022clustering}. On the other hand, the approach in this paper is to propose a probabilistic model for detecting the community structure of from multiple correlated graphs, where the graphs refer to multiple graphical representations obtained from different perspectives or modalities, which are used to handle network data from diverse data sources or with different features. This is a significant difference from the settings we are considering.


There are also other works that consider learning latent community structure from multiple correlated graphs~\cite{racz2022correlated,racz2021correlated,gaudio2022exact,onaran2016optimal}. Ref.~\cite{racz2021correlated} explores exact community recovery by determining the information-theoretic limit for exact graph matching in correlated SBMs. Ref.~\cite{gaudio2022exact} examines the information-theoretic threshold for exact community recovery from two correlated block models, which specifically deals with uncertainties as well as dependencies resulting from partial and inexact matching between the correlated graphs. This is a notable contrast to the MVSBM in which we do not consider graph matching as an intermediate step for community detection. The MVSBM tackles the challenge of integrating and merging data from various views in a more comprehensive manner.

In particular, we highlight the works of the labeled SBM~\cite{yun2016optimal} and weighted SBM~\cite{jog2015information}, which are closely related to this paper. The labeled SBM denotes the observation of the similarity between two nodes in the form of a label, then inferring the hidden partition from observing the random labels on each pair of nodes. In the MVSBM, the connection vector (across $D$ graphs) between a pair of nodes can be considered as a label in the labeled SBM. In fact, the proof of our information-theoretic lower bound is inspired by the proof in~\cite{yun2016optimal}, particularly their change-of-measure technique. Ref.~\cite{jog2015information} reveals a connection between the R\'{e}nyi divergence and the success probability of the maximum likelihood estimator for exact community recovery in the weighted SBM. While the problem in~\cite{jog2015information} is similar to ours, the proof techniques are different. 


\vspace{10pt}
\section{Preliminaries and problem setup}
\subsection{Notations}
For positive integers $a, b \in \mathbb{N}$ such that $a < b$, let $[a: b]$
denote the set of integers $\{a, a+1, \ldots, b\}$ ranging from a to $b$. Let $[b]$ denote the set of integers $\{1, 2,\ldots, b \}$. Random variables are denoted by uppercase letters (e.g., $\bm{X}$), while their realiations are denoted by lowercase letters (e.g., $\bm{x}$). We use calligraphic letters to denote sets (e.g., $\mathcal{X}$). For any set $\mathcal{X}$, let $\Delta(\mathcal{X})$ be the \emph{probability simplex} (i.e., all possible probability distributions) over $\mathcal{X}$.

We adopt \emph{asymptotic notations}, including $O(.)$, $o(.)$, $\Omega(.)$, $\omega(.)$, and $\Theta(.)$, to describe the limiting behaviour of functions/sequences~\cite[Chapter~3.1]{cormen2022introduction}. For instance, we say a pair of functions $f(n)$ and $g(n)$ satisfies $f(n) = O(g(n))$ if there exist $m > 0$ and $N_0 \in \mathbb{N}^+$ such that for all $n > N_0$, $|f(n)| \le mg(n)$. For an event $\mathcal{E}$, we use $\mathbf{1}\{\mathcal{E} \}$ to denote the \emph{indicator function} that outputs $1$ if $\mathcal{E}$ is true and outputs $0$ otherwise. Additional notations and definitions used are provided in Table~\ref{table:1}.


\subsection{Mulit-view Stochastic Block Model (MVSBM)} \label{sec:MVSBM}

In this paper, we consider a symmetric setting with $n$ nodes partitioned into two equal-sized communities: $C_+$ and $C_-$. Each node $i \in [n]$ is labeled with $X(i) \in \{-1, +1\}$, signifying its membership in either community $C_+$ or community $C_-$. The ground truth label of the $n$ nodes is denoted by a length-$n$ vector $\bm{X} = (X(1), X(2), \ldots, X(n))$, which is sampled uniformly 
 at random from the set 
\begin{align}
 \mathcal{\bm{X}} := \left\{ \bm{x} \in \{-1, +1\}^n : \sum_{i=1}^{n} x(i) = 0 \right\} 
 \end{align}
 that contains all possible ground truth labels. Note that $\bm{x}$ and $-\bm{x}$ correspond to the same balanced partition of the $n$ nodes.

A multi-view stochastic block model (MVSBM) comprises $D$ random graphs, represented by $\bm{G} := \{G^1, G^2, \ldots, G^D \}$, where $D$ is a positive integer. Each graph offers a distinct ``view'' on the same set of nodes $[n]$, and all the graphs share the same ground truth label. For each graph $G^d$ (where $d \in [D]$), we denote its corresponding adjacency matrix as $A^d \in \{0, 1\}^{n \times n}$. In this matrix, $A^d_{ij} = 1$ denotes the presence of an edge connecting nodes $i$ and $j$, while $A^d_{ij} = 0$ denotes the absence of an edge between nodes $i$ and $j$. We further define the collection of the $D$ adjacency matrices as an \emph{adjacency tensor} $\bm{A} := [A^1, A^2, \ldots, A^D]$. As each $A^d$ is an $n \times n$ matrix, we note that the adjacency tensor $\bm{A}$ exhibits dimensions of $D \times n \times n$. By noting that $\bm{A}$ and $\bm{G}$ are in a one-to-one mapping relationship, one can use the adjacency tensor $\bm{A}$ to represent $\bm{G}$, as illustrated in the Fig.\ref{fig:picture}.

Intuitively, graphs from different views exhibit commonalities, but each view provides unique detailed information. To capture this property, we assume the $D$ random graphs in $\bm{G}$ are generated in a correlated manner and may not be independent. Specifically, the MVSBM employs joint probability distributions across the $D$ views to model the graph generation process, as elaborated in the following.
  
For each pair of nodes $(i, j)$, let $\bm{A}_{ij} := (A^1_{ij}, A^2_{ij},\ldots, A^D_{ij} ) $ be the \emph{connection vector} between nodes $i$ and $j$ across the $D$ views. Specifically, if nodes $i$ and $j$ belong to the same community, it is assumed that the connection vector $A_{ij}$  follows a distribution $p \in \Delta(\{0,1 \}^D)$, where $\Delta(\{0,1 \}^D)$ denotes the probability simplex over the set $\{0,1\}^D$. On the other hand, if nodes $i$ and $j$ belong to different communities, it is assumed that $A_{ij}$  follows a different distribution $q \in \Delta(\{0,1 \}^D)$. Here, $p$ and $q$ are specific representations of the joint probability distribution of the $D$ random graphs.


\begin{definition}[MVSBM]
    Given positive integers $n$ and $D$, and a pair of distributions  
    $p, q \in \Delta(\{ 0, 1 \}^D)$, 
    the MVSBM$(n, p, q, D)$ is a random graph model that samples the ground truth label $\bm{X}=(X(1), X(2), \ldots, X(n))$ as well as the $D$ random graphs with adjacency tensor $\bm{A}$ according to the following rules: 
    \begin{itemize}
        \item $\bm{X}$ is chosen uniformly at random from $\mathcal{\bm{X}}$.
        \item The connection vector $\bm{A}_{ij}$ is generated according to
        \begin{align}
        \bm{A}_{ij} \sim \begin{cases}
            p, \quad \text{if } X(i)  = X(j), \\
            q, \quad \text{if }  X(i)  \ne X(j).
        \end{cases} \label{eq:A}
        \end{align}

    \end{itemize}
\end{definition}

\begin{table}[t]\centering \caption{Notations and definitions}
\begin{tabular}{p{1cm}|p{7cm}}
\hline
\multicolumn{1}{l|}{Notation} & Definition                                                               \\ \hline
$n$                             &  number of nodes                                                      \\ \hline
$p$                             & probability distribution of connection (in the same community)        \\ \hline
$q$                             & probability distribution of connection (in different communities) \\ \hline
$D $                            &  number of  graphs                                          \\ \hline
$\bm{A}$                           &  adjacency tensor of $n\times n \times D$ dimensions                                 \\ \hline
$A^d$                           &  adjacency matrix of the $d$-th graph                                   \\ \hline
$\bm{A}_{ij}$                           & connection vector of nodes $i$ and  $j$ across $D$ graphs               \\ \hline
$A^d_{ij} $                        &  connection of nodes $i$ and $j$ in the $d$-th graph    \\ \hline
$\mathcal{X}$                            & the community labels space                                               \\ \hline
$\bm{X}$                             & ground truth community label (random variable)                                 \\ \hline
$\bm{x}$                             & ground truth community label (realization)                                 \\ \hline
$X(i)$                         & the $i$-th element of $\mathcal{X}$ (random variable)                                      \\ \hline
$x(i)$                         & the $i$-th element of $\bm{X}$ (realization)                                      \\ \hline
$\bm{x}_0$                           & a specific  community label $\{+1,\cdots, +1,-1,\cdots, -1\}$                                       \\ \hline
\(E\)                            & error event (the estimator fails to output the ground truth)      \\ \hline
\(\varepsilon_{\phi}(n) \)                         & number of nodes misclassified by the estimator $\phi$                               \\ \hline
\(I(p,q) \)                         & R\'{e}nyi divergence between $p$ and $q$ (of order $1/2$)                            
\label{table:1}    \\ \hline
\end{tabular}
\end{table}

\subsection{Objectives}

Based on the adjacency tensor $\bm{A}$ generated according to the MVSBM, our objective is to use an estimator $\phi=\phi(\bm{A})$  to exactly recover the underlying partition $\bm{x}$ or $-\bm{x}$ (since $\bm{x}$ and $-\bm{x}$ correspond to the same partition). The estimator outputs estimated labels belonging to the set $\mathcal{\bm{X}}$.  In Definition~\ref{def:exact} below, we present the criterion for achieving the exact recovery of communities in the MVSBM.

\begin{definition}[Exact Recovery in the \textup{MVSBM}$(n, p, q,D)$] \label{def:exact}
    We say that \emph{exact recovery} of communities in the MVSBM$(n,p,q,D)$ is achievable if there exists an estimator $\phi(\bm{A})$ such that the following holds:
    $$
    \lim_{n \to \infty}  \underset{(\bm{X} , \bm{A}) \sim \text{MVSBM}(n, p, q, D) }{\mathbb{P}}\big(\phi(\bm{A}) \!=\! \bm{X} \text{ or } \phi(\bm{A}) \!=\! -\bm{X} \big) = 1.
    $$
    It essentially requires that the probability of successfully recovering the underlying partition tends to one as the number of nodes~$n$ tends to infinity.
 
\end{definition}

\begin{figure}[t] \centering 
\includegraphics[width=8.24cm,height=2.5cm]{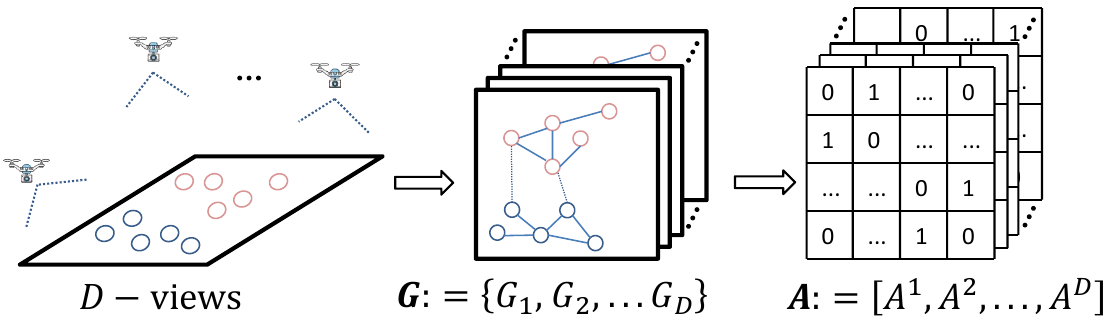} \caption{Illustration of the multi-view stochastic block model (MVSBM) that comprises $D$ view graphs.
} \label{fig:picture} 
\end{figure}

\subsection{Special Cases of MVSBM} \label{sec:special}

The MVSBM can be seen as a generalization of the standard SBM, expanding its scope from analyzing a single graph to multiple graphs that may exhibit correlations. With the ability to cover several random graph models as special cases (as described below), the MVSBM becomes a versatile model for investigating a wide range of community detection problems.

\subsubsection{Special Case 1}
The standard SBM with two balanced communities~\cite{abbe2015exact} is a special case of the MVSBM when specializing \(D=1\) (i.e., when there is only a single view). In this case, the distributions $p$ and $q$ are both Bernoulli distributions, satisfying $p(1) + p(0) = 1$ and $q(1) + q(0) = 1$ respectively. Each pair of nodes is connected with probability $p(1)$ if they belong to the same community, and is connected with probability $q(1)$ otherwise. The probabilities that a pair of nodes is not connected are thus either $p(0)$ or $q(0)$, depending on whether they belong to the same community or not.   

\subsubsection{Special Case 2}
The MVSBM has the capability to encompass the extreme case in which the $D$ graphs are \emph{identical}, i.e., for any pair of nodes, if an edge is present (resp. absent) in one graph, it will also be present (resp. absent) in all other graphs. The MVSBM  degenerates to  this case by choosing the distributions $p$ and $q$ as 

\begin{equation} \label{eq:sp1}
p(\bm{d})=\left\{
\begin{array}{c}
\begin{aligned}
&\alpha \quad\quad\quad\quad  \bm{d}=(1,1, \ldots, 1), \\
&1-\alpha \ \quad\quad \bm{d}=(0,0, \ldots, 0), \\ 
&0 \quad\quad\quad\quad  \text{otherwise},
\end{aligned}
\end{array}
\right.
\end{equation}
\begin{equation} \label{eq:sp2}
q(\bm{d})=\left\{
\begin{array}{c}
\begin{aligned}
&\beta \quad\quad\quad\quad  \bm{d}=(1,1, \ldots, 1), \\
&1-\beta \ \quad\quad \bm{d}=(0,0, \ldots, 0), \\ 
&0 \quad\quad\quad\quad  \text{otherwise},
\end{aligned}
\end{array}
\right.
\end{equation}
for some $\alpha, \beta \in [0,1]$.
Here, \(\bm{d}=(1,1, \ldots, 1)\) represents the case where an edge is present in all the $D$ graphs, while \(\bm{d}=(0,0, \ldots, 0)\) represents the absence of edges in all the $D$ graphs. Intuitively, observing multiple identical graphs is equivalent to observing a single graph in this special case. It is also not hard to see that the MVSBM with distributions $p$ and $q$ given in~\eqref{eq:sp1} and~\eqref{eq:sp2} is equivalent to the standard SBM described in Special Case~1.

\subsubsection{Special Case 3}

In addition to the extreme case of multiple identical graphs, the MVSBM can also encompass the opposite extreme case where the $D$ graphs are \emph{independent} (each is generated via an independent SBM). This can be achieved by choosing the distributions $p$ and $q$ to be \emph{product distributions}. Specifically, for every $i \in [D]$, let $p_i$ and $q_i$ be Bernoulli distributions that represent the probabilities of observing an edge in the $i$-th graph. The distributions $p$ and $q$ can then be expressed as 
\begin{align}
    p(\bm{d}) = \prod_{i=1}^{D} p_i(d_i) \quad \text{and} \quad q(\bm{d}) = \prod_{i=1}^{D} q_i(d_i), \label{eq:sp3}
\end{align}
where $d_i$ is the $i$-th element in the connection vector $\bm{d}$.

\vspace{10pt}
\section{Main results and Corollaries}
In this section, we present our main results regarding the conditions under which conditions the exact recovery of communities is achievable in the MVSBM.
Before introducing these results, we first introduce a key quantity $I(p,q)$ that measures the separation between the two distributions $p$ and $q$ in the MVSBM: 
\begin{equation}
I(p, q) := -2 \log \left( \sum_{\bm{d} \in \{0,1\}^D} [p(\bm{d}) q(\bm{d}]^{1/2} \right).
\end{equation}
Note that $I(p,q)$ is essentially the \emph{R\'{e}nyi divergence} (of order $1/2$) between $p$ and~$q$.
Intuitively, the two communities are easier to be separated if the divergence $I(p,q)$ between the distributions $p$ and $q$ is larger, and are harder to be separated otherwise.

\subsection{Assumptions}

Let $0^D := (0, 0, ..., 0)$ be an abbreviation of the length-$D$ all-zero vector, and let 
\begin{align}
\bar{p} := \max_{\bm{d} \in \{0,1\}^D \setminus \{ 0^D \} } \max\{p(\bm{d}), q(\bm{d}) \} 
\end{align}
be the maximum probability of observing at least one edge across the $D$ graphs (i.e., observing a connection vector $\bm{d}$ that does not equal $0^D$) between any pair of nodes. Similar to most prior works on the SBM, we consider the parameter regime where the probability of observing an edge is a vanishing function of $n$ (i.e., it is assumed that $\bar{p} = o(1)$), while we also assume $\bar{p} = \omega(1/n)$ to exclude the scenario wherein the graphs are extremely sparse (in which case exact recovery of communities is known to be impossible).  Moreover, we introduce two additional assumptions (A1) and (A2) on the distributions $p$ and $q$ in the following.


\begin{itemize}
    \item[--] (A1) There exists a positive constant $\rho > 1$ such that
\[
\frac{1}{\rho}\le \frac{p(\bm{d})}{q(\bm{d})} \leq \rho, \qquad \forall \bm{d} \in \{0,1\}^D.
\]
This assumption requires that the differences between the probabilities of observing any connection vector $\bm{d} \in \{0,1\}^D$ under distributions $p$ and $q$ are not substantial.
\item[--] (A2) There exists a positive constant \(\varepsilon > 0\) such that 
\[
\frac{ \sum_{\bm{d}\in \{0,1\}^D \setminus \{0^D\} }[p(\bm{d})-q(\bm{d})]^2}{\bar{p}^2} \geq \varepsilon.
\]
This assumption imposes a certain separation between the two distributions $p$ and $q$. 
\end{itemize}

\subsection{Main Results}\label{Results}
We first present a sufficient condition for exact recovery in the MVSBM (in Theorem~\ref{T1} below), where the condition depends critically on the R\'{e}nyi divergence $I(p,q)$. 

\begin{theorem}\label{T1}
    When the model parameters of the MVSBM$(n,p,q,D)$ satisfy
    \begin{equation}
    \lim_{n \to \infty}\frac{nI(p,q)}{\log n} > 2,
    \end{equation}
then exact recovery is achievable.
\end{theorem}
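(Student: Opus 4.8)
The plan is to analyze the maximum-likelihood estimator $\hat{\bm{x}}(\bm{A}):=\arg\max_{\bm{x}'\in\mathcal{\bm{X}}}\mathbb{P}(\bm{A}\mid\bm{x}')$, which coincides with the MAP rule (the prior on $\bm{X}$ being uniform on $\mathcal{\bm{X}}$) and is therefore optimal, so it suffices to show its error probability vanishes. Since the $\binom{n}{2}$ connection vectors $\{\bm{A}_{ij}\}_{i<j}$ are mutually independent given $\bm{X}$, the likelihood factorizes: $\mathbb{P}(\bm{A}\mid\bm{x}')=\prod_{i<j}f^{\bm{x}'}_{ij}(\bm{A}_{ij})$, where $f^{\bm{x}'}_{ij}=p$ if $x'(i)=x'(j)$ and $f^{\bm{x}'}_{ij}=q$ otherwise. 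By the node-relabelling symmetry of the MVSBM we may condition on $\bm{X}=\bm{x}_0$, and a union bound gives
\[
\mathbb{P}\big(\hat{\bm{x}}\notin\{\bm{x}_0,-\bm{x}_0\}\big)\;\le\;\sum_{\bm{x}'\in\mathcal{\bm{X}}\setminus\{\bm{x}_0,-\bm{x}_0\}}\mathbb{P}\big(\mathbb{P}(\bm{A}\mid\bm{x}')\ge\mathbb{P}(\bm{A}\mid\bm{x}_0)\big).
\]

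Next I would group the competitors $\bm{x}'$ by distance to $\{\bm{x}_0,-\bm{x}_0\}$. Each $\bm{x}'$ is obtained from $\bm{x}_0$ by flipping a set $A\subseteq C_+$ and a set $B\subseteq C_-$ with $|A|=|B|=:k$; write $\kappa:=\min\{k,\tfrac n2-k\}\in\{1,\dots,\lfloor n/4\rfloor\}$ (the two values $k$ and $\tfrac n2-k$ describing the partitions $\bm{x}'$ and $-\bm{x}'$). A short case check over node pairs shows that the ``same-community / different-community'' status changes exactly on the pairs with one endpoint in $A\cup B$ and one outside, of which there are $2\kappa(n-2\kappa)$, and on each such pair the per-edge log-likelihood ratio is an independent variable equal to $\log(q(\bm{A}_{ij})/p(\bm{A}_{ij}))$ with $\bm{A}_{ij}\sim p$, or to $\log(p(\bm{A}_{ij})/q(\bm{A}_{ij}))$ with $\bm{A}_{ij}\sim q$. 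A Chernoff bound with exponent $\tfrac12$, together with independence, then yields
\[
\mathbb{P}\big(\mathbb{P}(\bm{A}\mid\bm{x}')\ge\mathbb{P}(\bm{A}\mid\bm{x}_0)\big)\;\le\;\Big(\textstyle\sum_{\bm{d}}\sqrt{p(\bm{d})q(\bm{d})}\Big)^{2\kappa(n-2\kappa)}=e^{-\kappa(n-2\kappa)\,I(p,q)},
\]
because both $\mathbb{E}_p[(q/p)^{1/2}]$ and $\mathbb{E}_q[(p/q)^{1/2}]$ reduce to $\sum_{\bm{d}}\sqrt{p(\bm{d})q(\bm{d})}=e^{-I(p,q)/2}$ --- this is precisely where the order-$\tfrac12$ R\'enyi divergence enters (and $s=\tfrac12$ is the optimal Chernoff exponent, by convexity and symmetry of $s\mapsto\log\sum_{\bm d}p^{1-s}q^s+\log\sum_{\bm d}p^s q^{1-s}$). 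The half-moments are automatically finite (Cauchy--Schwarz), and the hypothesis $\lim_{n\to\infty}nI(p,q)/\log n>2$ forces $I(p,q)>0$ for $n$ large, so the bound is nonvacuous.

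Combining the above with the count $\le 2\binom{n/2}{\kappa}^2\le 2\big(\tfrac{en}{2\kappa}\big)^{2\kappa}$ of labelings at effective distance $\kappa$, I obtain
\[
\mathbb{P}\big(\hat{\bm{x}}\notin\{\bm{x}_0,-\bm{x}_0\}\big)\;\le\;\sum_{\kappa=1}^{\lfloor n/4\rfloor}2\exp\!\Big(2\kappa\log\tfrac{en}{2\kappa}-\kappa(n-2\kappa)\,I(p,q)\Big),
\]
and it remains to prove this tends to $0$ when $nI(p,q)=(c+o(1))\log n$ with $c>2$. I would split the sum at $\kappa=n/\log n$. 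For $\kappa\le n/\log n$ one has $n-2\kappa=(1-o(1))n$ and $\log\tfrac{en}{2\kappa}\le(1+o(1))\log n$, so each exponent is at most $\kappa\log n\,(2-c+o(1))\le-\tfrac{c-2}{2}\kappa\log n$ for $n$ large and $\sum_{\kappa\ge1}n^{-(c-2)\kappa/2}\to0$. For $n/\log n\le\kappa\le\lfloor n/4\rfloor$ one has $\log\tfrac{en}{2\kappa}=O(\log\log n)$ while $\kappa(n-2\kappa)I(p,q)\ge\tfrac12\kappa n\,I(p,q)=(1-o(1))\tfrac c2\kappa\log n$, so each exponent is at most $-\tfrac c3\kappa\log n\le-\tfrac c3 n$ and the at most $n$ remaining terms sum to $o(1)$. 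This proves exact recovery is achievable.

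The step I expect to be the main obstacle is this final case analysis: the crude estimate $\binom{n/2}{\kappa}^2\le n^{2\kappa}$ together with the weak bound $n-2\kappa\ge n/2$ only yields the threshold $c>4$, so one must genuinely exploit that $n-2\kappa=(1-o(1))n$ in the small-$\kappa$ regime --- and, symmetrically, the observation that the number of status-changed pairs depends on $k$ only through $\kappa=\min\{k,\tfrac n2-k\}$, which is what keeps the competitors lying near $-\bm{x}_0$ under control. The remaining ingredients (likelihood factorization, the pair-counting, and the Chernoff/R\'enyi computation) are routine.
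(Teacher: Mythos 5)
Your proposal is correct and follows essentially the same route as the paper's proof: the ML estimator, a union bound over competing labelings grouped by (effective) Hamming distance, the identification of the $2k(n-2k)$ status-changed pairs, a Chernoff bound at exponent $1/2$ yielding the per-labeling bound $e^{-k(n-2k)I(p,q)}$ (the paper's Lemma~1), the count $\binom{n/2}{k}^2\le(en/2k)^{2k}$, and the $k\leftrightarrow n/2-k$ symmetry to fold the sum into $k\le n/4$. Your final summation (splitting at $\kappa=n/\log n$) is a minor bookkeeping variant of the paper's closing estimates and correctly isolates the point you flag, namely that one must use $n-2k=(1-o(1))n$ for small $k$ to reach the threshold $2$ rather than $4$.
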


In particular, we show that Theorem~\ref{T1} can be achieved by using the maximum likelihood (ML) estimator $\phi_{\text{ML}}$, which ensures $\mathbb{P}(\phi_{\text{ML}}(\bm{A}) \in 
\{\bm{X}, -\bm{X} \} ) = 1 - o(1)$ and thus satisfies the exact recovery criterion defined in Definition~\ref{def:exact}. The detailed proof of Theorem~\ref{T1} is provided in Section~\ref{sec:proof_thm1}.

Next, we provide a \emph{negative result} (Theorem~\ref{T2}) showing that if the model parameters of the MVSBM fail to meet the condition in Theorem~\ref{T1}, the expected number of misclassified nodes will always be greater than one, regardless of the chosen estimator. Specifically, given an estimator $\phi$, we denote the number of nodes that are misclassified by the estimator as $\varepsilon_{\phi}(n)$, and its expectation as $\mathbb{E}[\varepsilon_{\phi}(n)]$ (where the expectation is over the generation of the graphs as well as the intrinsic randomness in the estimator).

\begin{theorem}\label{T2}
When the model parameters of the MVSBM$(n,p,q,D)$ satisfy
\begin{equation}
    \lim_{n \to \infty}\frac{nI(p,q)}{\log n} < 2, \label{eq:T2}
    \end{equation}
then for any estimator $\phi$, the expected number of misclassified nodes must satisfy $\mathbb{E}[\varepsilon_{\phi}(n)] > 1$.
\end{theorem}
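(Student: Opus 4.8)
The plan is to prove the converse by a genie‑aided reduction to a per‑node hypothesis test, combined with a sharp large‑deviation estimate for that test obtained through a change of measure (in the spirit of the technique of~\cite{yun2016optimal}). By the symmetry of the MVSBM under relabeling of nodes it suffices to lower bound $\mathbb{E}[\varepsilon_{\phi}(n)\mid \bm{X}=\bm{x}_0]$ for the fixed ground truth $\bm{x}_0=(+1,\ldots,+1,-1,\ldots,-1)$; write $\mathbb{P}_{\bm{x}_0}(\cdot)$ for the conditional law given $\bm{X}=\bm{x}_0$. For a node $v$ with $x_0(v)=+1$, let $\bm{x}_0^{[v]}$ be the (unbalanced) labeling obtained by flipping $v$ only, and set
\[
\Lambda_v \;:=\; \log\frac{\mathbb{P}(\bm{A}\mid \bm{X}=\bm{x}_0^{[v]})}{\mathbb{P}(\bm{A}\mid \bm{X}=\bm{x}_0)}\;=\;\sum_{k\in C_-}\log\frac{p(\bm{A}_{vk})}{q(\bm{A}_{vk})}\;-\;\sum_{k\in C_+\setminus\{v\}}\log\frac{p(\bm{A}_{vk})}{q(\bm{A}_{vk})},
\]
which depends only on the connection vectors $\{\bm{A}_{vk}\}_{k\ne v}$ incident to $v$. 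Following the standard route for exact‑recovery converses (cf.~\cite{yun2016optimal,abbe2015exact,mossel2015consistency}), I would show that for every estimator $\phi$,
\[
\mathbb{E}[\varepsilon_{\phi}(n)]\;\ge\;(1-o(1))\sum_{v}\mathbb{P}_{\bm{x}_0}\!\big(\Lambda_v\ge 0\big)\;=\;(1-o(1))\,n\cdot p_{\mathrm{err}}(n),
\]
where $p_{\mathrm{err}}(n):=\mathbb{P}_{\bm{x}_0}(\Lambda_v\ge 0)$ (the same for all $v$ by symmetry): a node whose incident data is at least as consistent with the opposite community cannot be recovered by any estimator without incurring a misclassification. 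The two genuine technicalities — the global sign ambiguity between $\bm{x}_0$ and $-\bm{x}_0$, and the exact‑balance constraint (so that flips must be paired across $C_+$ and $C_-$) — are handled as in prior work: any $\phi$ with $\mathbb{E}[\varepsilon_{\phi}(n)]$ bounded must coincide with $\bm{X}$ up to sign on all but $O(1)$ nodes with high probability, which pins the sign, and the balance constraint is dealt with by pairing the ``swappable'' nodes, of which there are (again by symmetry of $C_+$ and $C_-$) comparably many on both sides.

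The heart of the proof is the estimate $p_{\mathrm{err}}(n)=e^{-\frac{n}{2}I(p,q)(1+o(1))}$. The \emph{upper} bound $p_{\mathrm{err}}(n)\le e^{-\frac{n}{2}I(p,q)(1+o(1))}$ is a one‑line Chernoff bound at $s=\tfrac12$: since $\Lambda_v$ is a sum of $n-O(1)$ independent terms with $\bm{A}_{vk}\sim q$ for $k\in C_-$ and $\bm{A}_{vk}\sim p$ for $k\in C_+\setminus\{v\}$, one gets $\mathbb{E}_{\bm{x}_0}[e^{\Lambda_v/2}]=\big(\sum_{\bm{d}}[p(\bm{d})q(\bm{d})]^{1/2}\big)^{n-O(1)}=e^{-\frac{n}{2}I(p,q)(1+o(1))}$ by the very definition of $I(p,q)$, and then $p_{\mathrm{err}}(n)=\mathbb{P}(e^{\Lambda_v/2}\ge 1)\le\mathbb{E}[e^{\Lambda_v/2}]$. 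The matching \emph{lower} bound is where the change of measure enters. Let $\tilde m(\bm{d}):=[p(\bm{d})q(\bm{d})]^{1/2}\big/\sum_{\bm{d}'}[p(\bm{d}')q(\bm{d}')]^{1/2}$ be the normalized geometric‑mean distribution, and let $\widetilde{\mathbb{P}}$ be the law under which the $\{\bm{A}_{vk}\}_{k\ne v}$ are i.i.d.\ $\tilde m$; a direct computation yields $\mathbb{E}_{\widetilde{\mathbb{P}}}[\Lambda_v]=o(\sigma_n)$ and $\sigma_n^2:=\mathrm{Var}_{\widetilde{\mathbb{P}}}(\Lambda_v)=\Theta(n\bar p)$, together with the identity $\log\frac{d\mathbb{P}_{\bm{x}_0}}{d\widetilde{\mathbb{P}}}=-\tfrac{n}{2}I(p,q)(1+o(1))-\tfrac12\Lambda_v$. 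Hence
\[
p_{\mathrm{err}}(n)\;\ge\;\mathbb{P}_{\bm{x}_0}\big(0\le\Lambda_v\le\sigma_n\big)\;=\;\mathbb{E}_{\widetilde{\mathbb{P}}}\!\left[\mathbf{1}\{0\le\Lambda_v\le\sigma_n\}\cdot e^{-\frac{n}{2}I(p,q)(1+o(1))-\Lambda_v/2}\right],
\]
which on the event in the indicator is at least $e^{-\frac{n}{2}I(p,q)(1+o(1))-\sigma_n/2}\cdot\mathbb{P}_{\widetilde{\mathbb{P}}}(0\le\Lambda_v\le\sigma_n)$, and $\mathbb{P}_{\widetilde{\mathbb{P}}}(0\le\Lambda_v\le\sigma_n)$ is bounded away from $0$ by a Berry--Esseen estimate. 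Assumptions (A1) and (A2) are exactly what make this go through: (A1) gives $|\log(p(\bm{d})/q(\bm{d}))|\le\log\rho$, so the summands of $\Lambda_v$ are uniformly bounded and the Lyapunov/Berry--Esseen condition holds; (A1) and (A2), together with $\bar p=\omega(1/n)\cap o(1)$, yield $I(p,q)=\Theta(\bar p)$, so $nI(p,q)\to\infty$ and $\sigma_n=\Theta(\sqrt{n\bar p})=o\big(nI(p,q)\big)$, which lets the $\sigma_n/2$ term and the other lower‑order terms be absorbed into the $(1+o(1))$. This gives $p_{\mathrm{err}}(n)\ge e^{-\frac{n}{2}I(p,q)(1+o(1))}$, as claimed.

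Finally, under the hypothesis $\lim_{n\to\infty}\frac{nI(p,q)}{\log n}<2$ there is $\delta>0$ with $nI(p,q)\le(2-\delta)\log n$ for all large $n$, so $p_{\mathrm{err}}(n)\ge n^{-1+\delta/2-o(1)}$ and therefore
\[
\mathbb{E}[\varepsilon_{\phi}(n)]\;\ge\;(1-o(1))\,n\cdot n^{-1+\delta/2-o(1)}\;=\;n^{\delta/2-o(1)}\;\longrightarrow\;\infty,
\]
so in particular $\mathbb{E}[\varepsilon_{\phi}(n)]>1$ for all sufficiently large $n$ and every estimator $\phi$, which is the assertion of Theorem~\ref{T2}. (If the limit in the hypothesis equals $0$, so that $nI(p,q)$ stays bounded or vanishes, the estimate is only more favorable: $p_{\mathrm{err}}(n)=\Theta(1)$ or $\to\tfrac12$, whence $\mathbb{E}[\varepsilon_{\phi}(n)]=\Omega(n)$.)

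I expect the main obstacle to be the \emph{lower} bound on $p_{\mathrm{err}}(n)$ in the second step: crude tools (the squared Bhattacharyya coefficient, or Bretagnolle--Huber) lose a factor of two in the exponent and would produce only the threshold $1$ in place of $2$, so one genuinely needs the tilted‑measure computation together with a local/Berry--Esseen control of $\Lambda_v$ on the $\sigma_n$‑scale; identifying the right window width and verifying the Lyapunov condition — precisely the places where (A1), (A2) and the density scaling $\bar p=\omega(1/n)\cap o(1)$ are invoked — is the delicate point. The genie reduction in the first step is standard but also requires some care with the sign ambiguity and the exact‑balance constraint.
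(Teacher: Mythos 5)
Your overall route is sound and genuinely different in its mechanics from the paper's, although both are change-of-measure arguments at heart. The paper (Section~\ref{sec:proof_thm2}) perturbs the \emph{model}: it introduces a law $\Psi$ under which only the edges incident to node $1$ are redrawn from distributions $\tilde p,\tilde q$ chosen so that node $1$'s neighborhood becomes uninformative (Eqn.~\eqref{eq:LR}), proves Lemma~\ref{lemma:change} relating $\mathbb{P}_\Psi(R\le f(n))$ to $e^{f(n)}\mathbb{E}_\Phi[\varepsilon_\phi(n)]/n+\tfrac12$ (the $\tfrac12$ coming from the fact that under $\Psi$ node $1$ is classified correctly with probability exactly one half), and then needs only Chebyshev's inequality on $R$ under $\Psi$ together with $\mathbb{E}_\Psi[R]\approx\tfrac n2 I(p,q)$. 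You instead keep the true model and run a per-node genie test, tilting to the normalized geometric mean $\tilde m\propto\sqrt{pq}$ and using Berry--Esseen to obtain the matching lower bound $p_{\mathrm{err}}\ge e^{-\frac n2 I(p,q)(1+o(1))}$; your identity $\log\frac{d\mathbb{P}_{\bm x_0}}{d\widetilde{\mathbb{P}}}=-\frac{n-1}{2}I(p,q)-\tfrac12\Lambda_v$ is correct, and the variance and Lyapunov estimates you invoke do follow from (A1), (A2), and $\bar p=\omega(1/n)\cap o(1)$ as you say. Your conclusion $\mathbb{E}[\varepsilon_\phi(n)]=n^{\delta/2-o(1)}\to\infty$ is stronger than the theorem's ``$>1$''. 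What the paper's route buys is that it never needs local CLT control --- Chebyshev suffices because one only needs $R$ not to fall too far below its mean with probability at least $3/4$; what your route buys is the sharper quantitative blow-up of the misclassification count.

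The one genuine soft spot is your reduction $\mathbb{E}[\varepsilon_\phi(n)]\ge(1-o(1))\,n\,p_{\mathrm{err}}(n)$. The sentence offered for it --- that a node with $\Lambda_v\ge0$ ``cannot be recovered by any estimator without incurring a misclassification'' --- is false as stated: $\{\Lambda_v\ge0\}$ is an event about the data, and an estimator may still label $v$ correctly on that event. The correct statement is a Bayes-risk bound of the form $\mathbb{P}(v\text{ misclassified})\ge\tfrac12\mathbb{P}_{\bm x_0}(\Lambda_v\ge0)$ for the genie-aided test (the constant is harmless here), and even that requires handling (i) the global sign ambiguity and (ii) the fact that the single-flip labeling $\bm x_0^{[v]}$ is unbalanced and hence has prior probability zero, so the genie must reveal all labels except a \emph{pair} $(v,w)$ with $w$ in the opposite community, making the relevant statistic $\Lambda_v+\Lambda_w$ plus a cross term rather than $\Lambda_v$ alone. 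You name both issues but defer them entirely to prior work; in the paper these are precisely the content of Lemma~\ref{lemma:change} and its proof in Section~\ref{sec:lemma2}, so this step cannot be left as a remark if your argument is to stand on its own. None of this changes the exponent, so your proof is repairable, but as written the reduction is asserted rather than proved.
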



Theorem~\ref{T2} establishes a condition (i.e., when the R\'{e}nyi divergence $I(p,q)$ is not large enough) under which it is impossible for any estimator to misclassify less than (or equal to) one node on average. The proof of Theorem~\ref{T2} is provided in Section~\ref{sec:proof_thm2}.


Combining Theorems~\ref{T1} and~\ref{T2} together, we conclude that there exists a \emph{sharp threshold} for community detection in the MVSBM. This threshold, denoted as
\begin{align}
\lim_{n \to \infty}\frac{nI(p,q)}{\log n} = 2, \label{eq:thre}
\end{align}
signifies that above the threshold, it is possible to ensure the number of misclassified nodes to be zero with high probability (i.e., achieve exact recovery of communities). Conversely, below the threshold, it becomes impossible to ensure the expected number of misclassified nodes to less than (or equal to) one, regardless of the chosen estimator.

\subsection{Corollaries derived from Theorems~\ref{T1} and~\ref{T2}}
\label{sec:Corollaries}

In the following, we apply our main results to the special cases of the MVSBM discussed in Section~\ref{sec:special}. This helps us to understand the information-theoretic limits of community detection in these special cases and also enables a comparison of our results with those from previous works.

\begin{corollary}[Special case 1] \label{cor1} For the standard SBM with two balanced communities (i.e., when $D =1$ and when $p$ and $q$ are Bernoulli distributions), we have 
$ I(p,q) = [p(1)^{1/2} - q(1)^{1/2} ]^2+ O(p(1)q(1)$,
and thus the threshold in~\eqref{eq:thre} becomes 
\begin{align}
\lim_{n \to \infty}\frac{n [ p(1)^{1/2} - q(1)^{1/2} ]^2}{\log n} = 2. 
\end{align}
\end{corollary}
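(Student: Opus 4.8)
The argument is a direct specialization of the definition of $I(p,q)$ followed by a short asymptotic expansion. When $D=1$, the sum defining $I(p,q)$ ranges over $\bm{d}\in\{0,1\}$; writing $a:=p(1)$, $b:=q(1)$ and using $p(0)=1-a$, $q(0)=1-b$, we obtain
\[
I(p,q) = -2\log\!\Bigl(\sqrt{(1-a)(1-b)}+\sqrt{ab}\,\Bigr).
\]
The standing assumption $\bar p=o(1)$ forces $a,b=o(1)$, and assumption (A1) applied to $\bm{d}=(1)$ gives $a=\Theta(b)$, hence $a^2=O(ab)$ and $b^2=O(ab)$.

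First I would expand the two square roots. Since $(1-a)(1-b)=1-(a+b)+ab$ and $\sqrt{1-x}=1-\tfrac{x}{2}+O(x^2)$, we get $\sqrt{(1-a)(1-b)}=1-\tfrac{a+b}{2}+O(ab)$, so that
\[
\sqrt{(1-a)(1-b)}+\sqrt{ab}=1-t,\qquad t:=\tfrac12\bigl(\sqrt a-\sqrt b\,\bigr)^2+O(ab).
\]
Because $t=O(\bar p)=o(1)$, a second expansion $-2\log(1-t)=2t+O(t^2)$ together with $t^2=O(\bar p^2)=O(ab)$ yields
\[
I(p,q)=\bigl(\sqrt a-\sqrt b\,\bigr)^2+O\bigl(p(1)q(1)\bigr),
\]
which is the first displayed claim of the corollary.

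It remains to check that replacing $I(p,q)$ by $(\sqrt a-\sqrt b)^2$ does not alter the threshold. By assumption (A2) with $D=1$ we have $(a-b)^2\ge\varepsilon\bar p^2$; combined with $a-b=(\sqrt a-\sqrt b)(\sqrt a+\sqrt b)$ and $\sqrt a+\sqrt b\le 2\sqrt{\bar p}$ this gives $(\sqrt a-\sqrt b)^2\ge\tfrac{\varepsilon}{4}\bar p$, while the trivial bound $(\sqrt a-\sqrt b)^2\le a+b\le 2\bar p$ shows $(\sqrt a-\sqrt b)^2=\Theta(\bar p)$. Since $p(1)q(1)=ab=\Theta(\bar p^2)=o(\bar p)$, the error term is $o\bigl((\sqrt a-\sqrt b)^2\bigr)$, so $I(p,q)=(1+o(1))(\sqrt a-\sqrt b)^2$ and
\[
\lim_{n\to\infty}\frac{nI(p,q)}{\log n}=\lim_{n\to\infty}\frac{n(\sqrt a-\sqrt b)^2}{\log n}
\]
whenever either limit exists. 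Substituting this into the threshold~\eqref{eq:thre} gives the stated form.

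The computation itself is elementary; the only point requiring care is the bookkeeping of the error terms — using (A1) to absorb $a^2,b^2$ into $O(ab)$ in the first expansion, and using (A2) to certify $ab=o((\sqrt a-\sqrt b)^2)$ so that the $O(ab)$ correction is invisible at the threshold. I do not expect any genuine obstacle beyond this.
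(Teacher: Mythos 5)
Your proof is correct and follows essentially the same route as the paper: specialize the definition of $I(p,q)$ to $D=1$, expand the square roots via $(1-x)^{1/2}=1-\tfrac{x}{2}+O(x^2)$ and the logarithm via $\log(1-x)=-x+O(x^2)$, and read off $I(p,q)=\bigl(p(1)^{1/2}-q(1)^{1/2}\bigr)^2+O(p(1)q(1))$. The only difference is that you additionally invoke (A1) to absorb $p(1)^2,q(1)^2$ into $O(p(1)q(1))$ and (A2) to certify that the remainder is $o\bigl((p(1)^{1/2}-q(1)^{1/2})^2\bigr)$ and hence does not move the threshold --- careful bookkeeping that the paper leaves implicit.
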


We point out that the threshold in Corollary~\ref{cor1} matches the information-theoretic limit of the exact recovery of communities in the standard SBM~\cite{abbe2015exact}. Strictly speaking,  Corollary~\ref{cor1} also slightly generalizes the results in~\cite{abbe2015exact} as they only considered the parameter regime where $p(1)$ and $q(1)$ scale as $\Theta((\log n)/n)$, while our result allows for a much broader range for $p(1)$ and $q(1)$. 

\begin{corollary}[Special case 2] \label{cor2}For the extreme case when the $D$ graphs are identical (i.e., when the distributions $p$ and $q$ are chosen to satisfy~\eqref{eq:sp1} and~\eqref{eq:sp2}  respectively), we have $I(p,q) = [\alpha^{1/2} - \beta^{1/2}]^2 + O(\alpha \beta)$, and thus the threshold in~\eqref{eq:thre} becomes 
\begin{align}
\lim_{n \to \infty}\frac{n [\alpha^{1/2} - \beta^{1/2} ]^2}{\log n} = 2. 
\end{align}
\end{corollary}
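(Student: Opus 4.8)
The plan is to specialize the definition $I(p,q) = -2\log\big(\sum_{\bm d \in \{0,1\}^D}[p(\bm d)q(\bm d)]^{1/2}\big)$ to the two-point distributions in~\eqref{eq:sp1} and~\eqref{eq:sp2} and then Taylor-expand in the sparse regime. First I would observe that under these choices both $p$ and $q$ are supported entirely on $\{0^D, 1^D\}$, where $1^D := (1,\dots,1)$, so every product $p(\bm d)q(\bm d)$ vanishes except at $\bm d \in \{0^D, 1^D\}$. Hence the inner sum collapses to just two terms, $\sqrt{\alpha\beta} + \sqrt{(1-\alpha)(1-\beta)}$, and
\[
I(p,q) = -2\log\!\big(\sqrt{\alpha\beta} + \sqrt{(1-\alpha)(1-\beta)}\big).
\]
This is formally the same expression obtained in the $D=1$ Bernoulli case with $p(1)=\alpha$, $q(1)=\beta$, so the rest of the computation mirrors the proof of Corollary~\ref{cor1} verbatim with the renaming $p(1)\mapsto\alpha$, $q(1)\mapsto\beta$.

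Next, since $\bar p = \max\{\alpha,\beta\} = o(1)$ by the standing assumption, I would expand around zero: $\sqrt{(1-\alpha)(1-\beta)} = 1 - \tfrac{\alpha+\beta}{2} + O\big((\alpha+\beta)^2\big)$, so the argument of the logarithm equals $1 - \tfrac12(\sqrt\alpha-\sqrt\beta)^2 + O\big((\alpha+\beta)^2\big)$. Applying $-2\log(1-x) = 2x + O(x^2)$ with $x = \tfrac12(\sqrt\alpha-\sqrt\beta)^2 + O((\alpha+\beta)^2) = o(1)$ then gives $I(p,q) = (\sqrt\alpha-\sqrt\beta)^2 + O\big((\alpha+\beta)^2\big)$. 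To recover the exact form stated in the corollary I would invoke assumption (A1) at $\bm d = 1^D$, which yields $\tfrac1\rho \le \alpha/\beta \le \rho$; thus $\alpha$ and $\beta$ are of the same order, $(\alpha+\beta)^2 = \Theta(\alpha\beta)$, and the error term may be rewritten as $O(\alpha\beta)$, establishing $I(p,q) = [\alpha^{1/2}-\beta^{1/2}]^2 + O(\alpha\beta)$.

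Finally, substituting this identity into the threshold~\eqref{eq:thre}, I would argue that the $O(\alpha\beta)$ correction is asymptotically negligible against the main term: by assumption (A2), $(\alpha-\beta)^2 \ge \varepsilon\bar p^2$ forces $(\sqrt\alpha-\sqrt\beta)^2 = \Theta(\bar p)$, while $\alpha\beta = O(\bar p^2)$, so $n\cdot O(\alpha\beta)/\log n$ is a lower-order contribution to $nI(p,q)/\log n$ in every regime (it tends to $0$ when $(\sqrt\alpha-\sqrt\beta)^2 = \Theta((\log n)/n)$, and otherwise both quantities diverge or both vanish together). Hence the conditions of Theorems~\ref{T1} and~\ref{T2} are equivalent to their counterparts with $I(p,q)$ replaced by $(\sqrt\alpha-\sqrt\beta)^2$, and the threshold becomes $\lim_{n\to\infty} n[\alpha^{1/2}-\beta^{1/2}]^2/\log n = 2$.

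The computation itself is routine; the only points demanding care are honest bookkeeping of the error terms across the two Taylor expansions, and justifying that the $O(\alpha\beta)$ correction is genuinely negligible against the main term — this is precisely where assumption (A1) (to tighten $O((\alpha+\beta)^2)$ to $O(\alpha\beta)$) and assumption (A2) (to pin down the order of $(\sqrt\alpha-\sqrt\beta)^2$ near criticality) enter. No ideas beyond those already used in the proof of Corollary~\ref{cor1} are required.
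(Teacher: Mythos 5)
Your proposal is correct and follows essentially the same route as the paper: observe that $p$ and $q$ are supported on $\{0^D,1^D\}$ so that $I(p,q)=-2\log\bigl(\sqrt{\alpha\beta}+\sqrt{(1-\alpha)(1-\beta)}\bigr)$, which is exactly the expression arising in Corollary~\ref{cor1} with $p(1)\mapsto\alpha$, $q(1)\mapsto\beta$, and then repeat that Taylor expansion. Your additional bookkeeping — using (A1) to tighten $O((\alpha+\beta)^2)$ to $O(\alpha\beta)$ and (A2) to show the correction is negligible near the threshold — is more careful than the paper's own (which simply writes $O(p(1)q(1))$ and reduces to Corollary~\ref{cor1}), but it is the same argument.
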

In fact, the result in Corollary 2 is the same as that in Corollary~\ref{cor1}. This equivalence makes intuitive sense because observing multiple identical graphs is equivalent to observing only one graph in Special Case 2.

\begin{corollary}[Special case 3]\label{cor3}
For the other extreme case of observing $D$ independent graphs (i.e., when the distributions $p(\bm{d}) = \prod_{i=1}^D p_i(d_i)$ and $q(\bm{d}) = \prod_{i=1}^D q_i(d_i)$ are product distributions), we have $I(p,q) = \sum_{i=1}^D  [p_i(1)^{1/2} - q_i(1)^{1/2}]^2 + O\left(\max_{i\in [D]} (p_i(1)+ q_i(1))^2 \right).$
Thus the threshold in~\eqref{eq:thre} becomes 
\begin{align}
\lim_{n \to \infty}\frac{n \sum_{i=1}^D [p_i(1)^{1/2} - q_i(1)^{1/2}]^2}{\log n} = 2. 
\end{align}

\end{corollary}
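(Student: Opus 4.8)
The plan is to reduce Corollary~\ref{cor3} to the single‑graph computation already carried out in Corollary~\ref{cor1}, by exploiting the fact that the order‑$1/2$ R\'enyi divergence is \emph{additive} over product distributions, and then to invoke Theorems~\ref{T1} and~\ref{T2}. Concretely, substituting $p(\bm{d})=\prod_{i=1}^D p_i(d_i)$ and $q(\bm{d})=\prod_{i=1}^D q_i(d_i)$ into the definition of $I(p,q)$ and factoring the inner sum coordinatewise gives
\[
\sum_{\bm{d}\in\{0,1\}^D}[p(\bm{d})q(\bm{d})]^{1/2}=\prod_{i=1}^D\Big([p_i(0)q_i(0)]^{1/2}+[p_i(1)q_i(1)]^{1/2}\Big),
\]
so that taking $-2\log$ yields $I(p,q)=\sum_{i=1}^D I(p_i,q_i)$, where $I(p_i,q_i)$ is exactly the order‑$1/2$ R\'enyi divergence between the Bernoulli laws with parameters $p_i(1)$ and $q_i(1)$.

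Next I would apply the expansion from Corollary~\ref{cor1} (Special Case~1 with $D=1$) to each coordinate: since $\bar{p}=o(1)$ forces $p_i(1),q_i(1)=o(1)$, a Taylor expansion of $\sqrt{(1-p_i(1))(1-q_i(1))}$ followed by that of $-2\log(1-\cdot)$ gives $I(p_i,q_i)=[p_i(1)^{1/2}-q_i(1)^{1/2}]^2+O(p_i(1)q_i(1))$. Summing over $i\in[D]$, using $p_i(1)q_i(1)\le\tfrac14(p_i(1)+q_i(1))^2$ and the fact that $D$ is a constant so that the $D$ error terms merge, yields $I(p,q)=\sum_{i=1}^D[p_i(1)^{1/2}-q_i(1)^{1/2}]^2+O\big(\max_{i\in[D]}(p_i(1)+q_i(1))^2\big)$, which is the stated identity. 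Plugging this into the sharp‑threshold condition $\lim_{n\to\infty}nI(p,q)/\log n=2$ from~\eqref{eq:thre} then gives the claimed threshold, and Theorems~\ref{T1} and~\ref{T2} deliver the achievability and impossibility halves.

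The one point requiring care — and the main obstacle — is verifying that the error term $O\big(\max_i(p_i(1)+q_i(1))^2\big)$ is asymptotically negligible relative to the main term $\sum_i[p_i(1)^{1/2}-q_i(1)^{1/2}]^2$, so that the two threshold conditions are genuinely equivalent. Here I would argue that for product distributions the dominant contributions to $\sum_{\bm{d}\ne 0^D}[p(\bm{d})-q(\bm{d})]^2$ come from the single‑edge vectors and equal $\sum_{i=1}^D(p_i(1)-q_i(1))^2$ up to a $(1+o(1))$ factor (the cross terms from the $\prod_{j\ne i}(1-p_j(1))$ factors being lower order, which is where Assumption~(A1) is convenient), so Assumption~(A2) yields $\sum_i(p_i(1)-q_i(1))^2=\Omega(\bar{p}^2)$. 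Combining this with $p_i(1),q_i(1)=O(\bar{p})$ and the identity $[p_i(1)^{1/2}-q_i(1)^{1/2}]^2=(p_i(1)-q_i(1))^2/(p_i(1)^{1/2}+q_i(1)^{1/2})^2$ gives $\sum_i[p_i(1)^{1/2}-q_i(1)^{1/2}]^2=\Omega\big(\bar{p}^{-1}\sum_i(p_i(1)-q_i(1))^2\big)=\Omega(\bar{p})$, while the error term is $O(\bar{p}^2)=o(\bar{p})$ since $\bar{p}=o(1)$; hence the error is $o(\text{main term})$, as needed.
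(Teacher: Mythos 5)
Your proof follows essentially the same route as the paper's: factor the order-$1/2$ R\'enyi divergence over the product structure so that $I(p,q)=\sum_{i=1}^D I(p_i,q_i)$, then apply the coordinatewise Taylor expansion from Corollary~\ref{cor1} (valid since $\bar{p}=o(1)$ forces $p_i(1),q_i(1)=o(1)$) and merge the $D$ error terms into $O\big(\max_{i\in[D]}(p_i(1)+q_i(1))^2\big)$. Your closing paragraph verifying that this remainder is negligible against the main term via (A1)--(A2) is a sound extra check that the paper leaves implicit, but it does not alter the approach.
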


The detailed proofs for Corollaries~\ref{cor1}-\ref{cor3} are provided in Section~\ref{sec:proof_Corollaries}.

\section{Proof of Theorem~\ref{T1}} \label{sec:proof_thm1}
In this section, we prove that the ML estimator achieves exact recovery if the condition in Theorem~\ref{T1} is satisfied. For any $\bm{x}' \in \mathcal{\bm{X}}$, the negative log-likelihood of $\bm{x}'$ with respect to the observed adjacency tensor $\bm{A}$ is defined as 
\begin{align}
L(\bm{A}|\bm{x}') := -\log \mathbb{P}_{ \bm{x}'}(\bm{A} ), 
\end{align}
where the subscript $\bm{x}'$ in $\mathbb{P}_{\bm{x}'}(\cdot)$ means that the tensor $\bm{A}$ is generated according to the communities determined by $\bm{x}'$.
The decision rule of the  ML estimator is then given  as
\begin{align}
\phi_{\text{ML}}(\bm{A}) = \underset{\bm{x}' \in \mathcal{\bm{X}} }{\text{argmin} } \ L \left(\bm{A}|\bm{x}' \right). \label{eq:ml}
\end{align}

We now analyze the performance of the ML estimator. We denote the \emph{error event} that the ML estimator fails to output the ground truth communities by \(E\). Our objective is to establish an upper bound on the probability of this error event, which is denoted as $\mathbb{P}(E)$. Note that 
\begin{align}
&\mathbb{P}(E) \notag \\
&=  \underset{(\bm{X} , \bm{A}) \sim \text{MVSBM}(n, p, q, D)  }{\mathbb{P}} \!\big(\phi_{\text{ML}}(\bm{A}) \!\ne\! \bm{X} \text{ and } \phi(\bm{A}) \!\ne\! -\bm{X} \big) \notag \\
&= \sum_{\bm{x} \in \mathcal{\bm{X}}}  \frac{1}{|\mathcal{\bm{X}}|} \mathbb{P}_{\bm{x}}\big(\phi_{\text{ML}}(\bm{A}) \!\ne\! \bm{x} \text{ and } \phi(\bm{A}) \!\ne\! -\bm{x} \big),
\end{align}
where $\bm{X}$ is the random variable chosen uniformly at random from the set $\mathcal{\bm{X}}$, and $\bm{A}$ is generated according to the rule given in Eqn.~\eqref{eq:A}. It is worth noting that, by symmetry, the error probabilities $\mathbb{P}_{\bm{x}}\big(\phi_{\text{ML}}(\bm{A}) \!\ne\! \bm{x} \text{ and } \phi(\bm{A}) \!\ne\! -\bm{x} \big)$ with respect to different realizations of $\bm{x} \in \mathcal{\bm{X}}$ are the same. Thus, without loss of generality, we only need to consider a specific $\bm{x}_0 \in \mathcal{X}$. For concreteness, we let $\bm{x}_0 = \{+1, \ldots, +1, -1, \ldots, -1\}$ be the vector such that the first $n/2$ elements are $+1$ and the remaining elements are $-1$ (as illustrated in Fig.~\ref{fig:label}). Thus, we have  
\begin{align}
\mathbb{P}(E) = \mathbb{P}_{\bm{x}_0} \big(\phi_{\text{ML}}(\bm{A}) \ne \bm{x}_0 \text{ and } \phi(\bm{A}) \ne -\bm{x}_0  \big).  \label{eq:12}
\end{align}
For any $\bm{x}' \in \{+1,-1\}^n$, we first define 
\begin{align}
\text{mis}(\bm{x}', \bm{x}_0) := \sum_{i=1}^n \mathbf{1}\{x'(i) \ne x_0(i) \}
\end{align}
as the total number of  different elements between \(\bm{x}'\) and \(\bm{x}_0\). Note that when $\phi_{\text{ML}}(\bm{A}) \ne \bm{x}_0$ and $\phi_{\text{ML}}(\bm{A}) \ne -\bm{x}_0$, we must have $\text{mis}(\phi_{\text{ML}}(\bm{A}), \bm{x}_0) = 2k$ for a certain $k \in [1: (n/2)-1]$. Thus, one can then reformulate Eqn.~\eqref{eq:12} as
\begin{align}
&\mathbb{P}_{\bm{x}_0} \big(\phi_{\text{ML}}(\bm{A}) \ne \bm{x}_0 \text{ and } \phi_{\text{ML}}(\bm{A}) \ne -\bm{x}_0  \big) \\
&=\mathbb{P}_{\bm{x}_0} \big( \cup_{k=1}^{(n/2)-1} \{ \text{mis}(\phi_{\text{ML}}(\bm{A}), \bm{x}_0) = 2k \} \big) \\
&\le \sum_{k=1}^{(n/2)-1}  \mathbb{P}_{\bm{x}_0} \big(  \text{mis}(\phi_{\text{ML}}(\bm{A}), \bm{x}_0) = 2k \big) \\
&= \sum_{k=1}^{(n/2)-1}  \ \sum_{\bm{x}' \in \mathcal{\bm{X}}: \text{mis}(\bm{x}',\bm{x}_0) = 2k } \mathbb{P}_{\bm{x}_0} (\phi_{\text{ML}}(\bm{A}) = \bm{x}') \\
&\le \sum_{k=1}^{(n/2)-1}  \ \sum_{\bm{x}' \in \mathcal{\bm{X}}: \text{mis}(\bm{x}',\bm{x}_0) = 2k } \mathbb{P}_{\bm{x}_0} (L(\bm{A}|\bm{x}') \le L(\bm{A}|\bm{x}_0)), \label{eq:17}
\end{align}
where Eqn.~\eqref{eq:17} follows from the estimation rule of the ML estimator introduced in Eqn.~\eqref{eq:ml}.

Next, we consider a fixed $k \in [1: (n/2)-1]$. A key observation is that for every $\bm{x}' \in \mathcal{\bm{X}}$ such that $\text{mis}(\bm{x}',\bm{x}_0) = 2k$, the induced error probability $\mathbb{P}_{\bm{x}_0} (L(\bm{A}|\bm{x}') \le L(\bm{A}|\bm{x}_0))$ is the same. Thus, it suffices to calculate this error probability by focusing on a specific $\bm{x}_k$ that differs from $\bm{x}_0$ in the index sets \([1: k]\) and \(\left[\frac{n}{2}+1: \frac{n}{2}+k\right]\). We illustrate $\bm{x}_k$ in Fig.~\ref{fig:label}.  ~Eqn.~\eqref{eq:17} is then equal to 
\begin{align}
\sum_{k=1}^{(n/2)-1} \!\!\! \mathbb{P}_{\bm{x}_0} (L(\bm{A}|\bm{x}_k) \le L(\bm{A}|\bm{x}_0)) \cdot |\bm{x}' \!\in\! \mathcal{\bm{X}}: \text{mis}(\bm{x}',\bm{x}_0) \!=\! 2k|. \label{eq:new}
\end{align}
Lemma~\ref{lemma:err} below presents an upper bound on $\mathbb{P}_{\bm{x}_0} (L(\bm{A}|\bm{x}_k) \le L(\bm{A}|\bm{x}_0))$, the probability that $\bm{x}_k$ is more likely than $\bm{x}_0$ (when $\bm{A}$ is generated according to the ground truth $\bm{x}_0$).

\begin{lemma} \label{lemma:err}
For any $k \in [1: (n/2)-1]$, we have
\begin{align}
\mathbb{P}_{\bm{x}_0} (L(\bm{A}|\bm{x}_k) \le L(\bm{A}|\bm{x}_0)) \le \exp\left\{-k(n-2k) \cdot I(p, q) \right\}. \notag  
\end{align}
In fact, the above inequality holds for any $\bm{x}' \in \mathcal{\bm{X}}$ such that $\text{mis}(\bm{x}',\bm{x}_0) = 2k$.
\end{lemma}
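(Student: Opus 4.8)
\textbf{Proof plan for Lemma~\ref{lemma:err}.}
The plan is to bound the error probability $\mathbb{P}_{\bm{x}_0}(L(\bm{A}|\bm{x}_k) \le L(\bm{A}|\bm{x}_0))$ via a Chernoff-type argument, exploiting the product structure of the likelihood over node pairs. First I would write out the log-likelihood ratio $L(\bm{A}|\bm{x}_k) - L(\bm{A}|\bm{x}_0) = \log \frac{\mathbb{P}_{\bm{x}_0}(\bm{A})}{\mathbb{P}_{\bm{x}_k}(\bm{A})}$ and observe that it decomposes as a sum over pairs $(i,j)$ of the per-pair log-likelihood ratios. Crucially, most pairs contribute zero: a pair $(i,j)$ contributes a nonzero term only if its ``same/different community'' status differs under $\bm{x}_0$ versus $\bm{x}_k$. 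Since $\bm{x}_k$ flips the labels on the index sets $S_1 = [1:k]$ and $S_2 = [\frac{n}{2}+1:\frac{n}{2}+k]$, the pairs whose status changes are exactly those with one endpoint in $S_1 \cup S_2$ and the other endpoint \emph{outside} $S_1 \cup S_2$ on the ``appropriate side'' — a careful count shows there are exactly $k(n-2k)$ such pairs. (More precisely: a pair with one endpoint in $S_1$ and the other in $C_+ \setminus S_1$ was ``same'' under $\bm{x}_0$ but is ``different'' under $\bm{x}_k$; a pair with one endpoint in $S_1$ and the other in $C_- \setminus S_2$ switches the other way; and symmetrically for $S_2$. Tallying these gives $k(n-2k)$ pairs in total, by symmetry between the two communities.)

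Next, for each of these $k(n-2k)$ relevant pairs, the per-pair log-likelihood ratio $\log\frac{p(\bm{A}_{ij})}{q(\bm{A}_{ij})}$ or its negative is an i.i.d. random variable (i.i.d. because, conditioned on $\bm{x}_0$, the connection vectors $\bm{A}_{ij}$ are independent across pairs, and the relevant pairs split into two groups of equal size $k(n-2k)/2$ — one group with $\bm{A}_{ij} \sim p$ contributing $+\log\frac{p}{q}$ and one group with $\bm{A}_{ij} \sim q$ contributing $-\log\frac{p}{q}$, with a sign convention I would fix carefully). Then I would apply the Chernoff bound with parameter $s = 1/2$: $\mathbb{P}_{\bm{x}_0}(L(\bm{A}|\bm{x}_k) \le L(\bm{A}|\bm{x}_0)) = \mathbb{P}(\sum \text{(per-pair terms)} \le 0) \le \mathbb{E}[e^{-\frac12 \sum(\cdots)}]$, which factorizes over the independent pairs into a product of moment generating functions. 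The key computation is that for a single pair, $\mathbb{E}_{p}\!\left[(q(\bm{A}_{ij})/p(\bm{A}_{ij}))^{1/2}\right] = \sum_{\bm{d}} p(\bm{d})^{1/2} q(\bm{d})^{1/2}$, and likewise for the $q$-distributed pairs, so each of the $k(n-2k)$ factors equals $\sum_{\bm{d}} [p(\bm{d})q(\bm{d})]^{1/2} = e^{-I(p,q)/2}$. Multiplying gives $\exp\{-\frac12 \cdot k(n-2k) \cdot I(p,q)\}$... here I should double-check the factor: actually since each of the $k(n-2k)$ pairs contributes one factor of $e^{-I(p,q)/2}$, the product is $\exp\{-\tfrac{k(n-2k)}{2} I(p,q)\}$; to land the stated bound $\exp\{-k(n-2k) I(p,q)\}$ I would instead group the pairs into $k(n-2k)$ \emph{matched} $(p,q)$ pairs — or more cleanly, note that each matched pair of one $p$-edge and one $q$-edge together contributes $\big(\sum_{\bm{d}}[p(\bm{d})q(\bm{d})]^{1/2}\big)^2 = e^{-I(p,q)}$, and there are $k(n-2k)$ such matched pairs since the $p$-group and $q$-group each have size $k(n-2k)$; I would verify this pairing/counting carefully in the writeup.

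Finally, I would note that the entire argument depended on $\bm{x}_k$ only through the count $k(n-2k)$ of status-flipped pairs, and this count is the same for every $\bm{x}' \in \mathcal{\bm{X}}$ with $\mathrm{mis}(\bm{x}',\bm{x}_0) = 2k$ (such an $\bm{x}'$ must flip $k$ nodes from each community, and by the symmetry of the complete graph on $n$ labeled nodes the number of status-flipped pairs is invariant under relabeling which $k$ nodes are flipped), which gives the last sentence of the lemma for free. The main obstacle I anticipate is getting the combinatorial pair count exactly right — being careful about which pairs flip from ``same'' to ``different'' versus vice versa, and confirming the group sizes so that the Chernoff product yields precisely $k(n-2k)\cdot I(p,q)$ in the exponent rather than half that; assumption (A1) guarantees the relevant moment generating functions are finite, so integrability is not an issue, and no delicate concentration estimates beyond the one-line Chernoff bound are needed.
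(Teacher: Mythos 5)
Your proposal is correct and follows essentially the same route as the paper's proof: isolate the node pairs whose same/different-community status flips between $\bm{x}_0$ and $\bm{x}_k$, apply the Chernoff/Bhattacharyya bound with exponent $1/2$ to the product of independent per-pair likelihood ratios, and use $\sum_{\bm{d}}[p(\bm{d})q(\bm{d})]^{1/2}=e^{-I(p,q)/2}$ for each flipped pair. The only thing to fix is the count you yourself flagged: the flipped pairs form four blocks of size $k(\tfrac{n}{2}-k)$ each (pairs with one endpoint in $S_1$ and one in $S_2$ stay ``different'' under both labelings and do not flip), for a total of $2k(n-2k)$ pairs split evenly into $k(n-2k)$ going $p\to q$ and $k(n-2k)$ going $q\to p$; hence your final ``matched pairs'' reconciliation is the correct one, and the earlier figures of $k(n-2k)$ total and two groups of size $k(n-2k)/2$ should be discarded.
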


\begin{figure}[t] \centering 
\includegraphics[width=8.04cm,height=3.5cm]{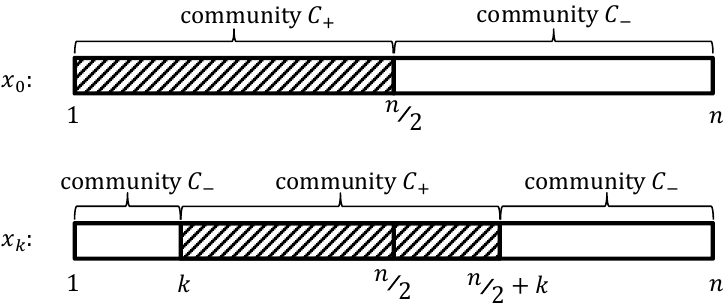} \caption{Illustration of the specific community labels $\bm{x}_0$ and $\bm{x}_k$, where $\bm{x}_0 = \{+1,\ldots, +1, -1, \ldots, -1  \}$. Note that $\bm{x}_k$ differs from $\bm{x}_0$ in the index sets \([1: k]\) and \(\left[\frac{n}{2}+1: \frac{n}{2}+k\right]\), thus $\text{mis}(\bm{x}_k,\bm{x}_0) = 2k$.  } \label{fig:label} 
\end{figure}

\begin{proof}[Proof of Lemma~\ref{lemma:err}]
Based on $\bm{x}_0$ and $\bm{x}_k$, we categorize the node pairs $(i,j)$ into five distinct classes, described as follows:

\begin{enumerate}
    \item For node pairs \((i, j)\) such that \(i \in [1: k]\) and \(j \in [k+1: \frac{n}{2}]\), it is clear from Fig.~\ref{fig:label} that nodes $i$ and $j$ belong to the same community under \(\bm{x}_0\), and belong to different communities under \(\bm{x}_k\). Thus, we have \(A_{ij} \sim p\) under \(\bm{x}_0\) and \(A_{ij} \sim q\) under \(\bm{x}_k\). The collection of such node pairs is denoted as $\mathcal{B}_1 \triangleq \{(i,j): i \in [1:k], j \in [k+1: \frac{n}{2}]\}.$
    \item For node pairs \((i, j)\) such that \(i \in [1: k]\) and \(j \in [\frac{n}{2} + k + 1: n]\), it is clear from Fig.~\ref{fig:label} that \(A_{ij} \sim q\) under \(\bm{x}_0\) and \(A_{ij} \sim p\) under \(\bm{x}_k\). The collection of such node pairs is denoted as $\mathcal{B}_2 \triangleq \{(i,j): i \in [1:k], j \in [\frac{n}{2} + k + 1: n]\}. $
    \item For node pairs \((i, j)\) such that \(i \in [\frac{n}{2}+1: \frac{n}{2}+k]\) and \(j \in [k+1: \frac{n}{2}]\), we have \(A_{ij} \sim q\) under \(\bm{x}_0\) and \(A_{ij} \sim p\) under \(\bm{x}_k\). The collection of such node pairs is denoted as $\mathcal{B}_3 \triangleq \{(i,j): i \in [\frac{n}{2}+1: \frac{n}{2}+k], j \in [k+1: \frac{n}{2}]\}. $
    \item For node pairs \((i, j)\) such that \(i \in [\frac{n}{2}+1: \frac{n}{2}+k]\) and \(j \in [\frac{n}{2}+k+1: n]\), we have \(A_{ij} \sim p\) under \(\bm{x}_0\) and \(A_{ij} \sim q\) under \(\bm{x}_k\). The collection of such node pairs is denoted as $\mathcal{B}_4 \triangleq \{(i,j): i \in [\frac{n}{2}+1: \frac{n}{2}+k], j \in [\frac{n}{2}+k+1: n] \}. $
    \item For node pairs \((i, j) \notin (\mathcal{B}_1\cup \mathcal{B}_2 \cup \mathcal{B}_3\cup \mathcal{B}_4) \), the distributions of $A_{ij}$ under both \(\bm{x}_0\) and \(\bm{x}_k\) are the same. The collection of such node pairs is denoted as $\mathcal{I}_{\checkmark} := \{(i, j) \in [n] \times [n] : i \ne j\} \setminus  (\mathcal{B}_1\cup \mathcal{B}_2 \cup \mathcal{B}_3\cup \mathcal{B}_4)$.
\end{enumerate}
For simplicity, we abbreviate the union of $\mathcal{B}_1, \mathcal{B}_2, \mathcal{B}_3, \mathcal{B}_4$ as 
$\mathcal{I}_{\times} := \mathcal{B}_1\cup \mathcal{B}_2 \cup \mathcal{B}_3\cup \mathcal{B}_4, $
where the subscript `$\times$' means that the node pairs in this set are classified \emph{incorrectly} under $\bm{x}_k$. Similarly, the subscript `$\checkmark$' associated with $\mathcal{I}_{\checkmark}$ means that the node pairs therein are classified \emph{correctly} under $\bm{x}_k$, and thus these node pairs can be ignored when calculating the error probability term $\mathbb{P}_{\bm{x}_0} (L(\bm{A}|\bm{x}_k) \le L(\bm{A}|\bm{x}_0))$.

For ease of analyses, we decompose the adjacency tensor $\bm{A}$ into two sub-tensors denoted by $\bm{A}_{\times}$ and $\bm{A}_{\checkmark}$, where $\bm{A}_{\times} := \{\bm{A}_{ij}: (i,j) \in \mathcal{I}_{\times} \}$ consists of the elements of $\bm{A}$ whose indices belong to $\mathcal{I}_{\times}$, and $\bm{A}_{\checkmark} := \{\bm{A}_{ij}: (i,j) \in \mathcal{I}_{\checkmark} \}$ consists of the elements of $\bm{A}$ whose indices belong to $\mathcal{I}_{\checkmark}$. Let $\mathcal{A}$ be the set of all possible realizations of the adjacency tensor $\bm{A}$. Let $\mathcal{A}_{\times}$ (resp. $\mathcal{A}_{\checkmark}$) be the set of all possible realizations of the sub-tensor $\bm{A}_{\times}$ (resp. $\bm{A}_{\checkmark}$). 

Then, we have
\begin{align}
&\mathbb{P}_{\bm{x}_0} (L(\bm{A}|\bm{x}_k) \le L(\bm{A}|\bm{x}_0)) \quad \quad \quad \quad \quad \quad \quad \quad \quad \quad \quad \quad \notag  \\
& = \sum_{a \in \mathcal{A}} \mathbb{P}_{\bm{x}_0}(\bm{A} = a) \cdot \bm{1}\left\{\frac{L(a|\bm{x}_k)}{L(a|\bm{x}_0)} \le 1 \right\} \notag \\
&= \sum_{a' \in \mathcal{A}_{\times}} \mathbb{P}_{\bm{x}_0}(\bm{A}_{\times} \!=\! a') 
\! \sum_{\tilde{a} \in \mathcal{A}_{\checkmark}} \mathbb{P}_{\bm{x}_0}(\bm{A}_{\checkmark} = \tilde{a}) \notag \\
&\qquad\qquad \cdot \bm{1} \left\{ \frac{\mathbb{P}_{\bm{x}_k}(\bm{A}_{\times} = a')\cdot \mathbb{P}_{\bm{x}_k}(\bm{A}_{\checkmark} = \tilde{a})}{\mathbb{P}_{\bm{x}_0}(\bm{A}_{\times} = a')\cdot \mathbb{P}_{\bm{x}_0}(\bm{A}_{\checkmark} = \tilde{a})} \ge 1 \right\} \label{eq:20}\\
&= \sum_{a' \in \mathcal{A}_{\times}} \mathbb{P}_{\bm{x}_0}(\bm{A}_{\times} = a') \sum_{\tilde{a} \in \mathcal{A}_{\checkmark}} \mathbb{P}_{\bm{x}_0}(\bm{A}_{\checkmark} = \tilde{a}) \notag \\
&\qquad\qquad  \cdot \bm{1} \{\mathbb{P}_{\bm{x}_k}(\bm{A}_{\times} = a') \ge \mathbb{P}_{\bm{x}_0}(\bm{A}_{\times} = a') \}  \label{eq:21}\\
&= \sum_{a' \in \mathcal{A}_{\times}} \mathbb{P}_{\bm{x}_0}(\bm{A}_{\times} = a') \cdot \bm{1} \left\{ \frac{\mathbb{P}_{\bm{x}_k}(\bm{A}_{\times} = a')}{\mathbb{P}_{\bm{x}_0}(\bm{A}_{\times} = a')} \ge 1 \right\} \notag  \\
&\le \sum_{a' \in \mathcal{A}_{\times}}  \min\{\mathbb{P}_{\bm{x}_k}(\bm{A}_{\times} = a'), \mathbb{P}_{\bm{x}_0}(\bm{A}_{\times} = a') \} \label{eq:23}
\\
&\le \sum_{a' \in \mathcal{A}_{\times}} \mathbb{P}_{\bm{x}_k}(\bm{A}_{\times} = a')^t \cdot \mathbb{P}_{\bm{x}_0}(\bm{A}_{\times} = a')^{1-t}, \label{eq:24}
\end{align}
where Eqn.~\eqref{eq:20} follows from the definition of the negative log-likelihood function and the fact that $\bm{A}_{\times}$ and $\bm{A}_{\checkmark}$ are generated independently, Eqn.~\eqref{eq:21} holds since $\mathbb{P}_{\bm{x}_0}(\bm{A}_{\checkmark} = \tilde{a}) = \mathbb{P}_{\bm{x}_k}(\bm{A}_{\checkmark} = \tilde{a})$ by the definitions of $\bm{A}_{\checkmark}$ and $\mathcal{I}_{\times}$. Eqn.~\eqref{eq:23} follows from the fact that for any constants $b,c>0$, we have $b\cdot \bm{1}\{c \ge b\} \le  \min\{b,c\}$. Eqn.~\eqref{eq:24} holds for any $t \in [0,1]$, and is due to the fact that $\min\{b,c\} \le b^t \cdot c^{1-t}$ for any constants $t \in [0,1]$ and  $b,c > 0$.
 

Note that the sub-tensor $\bm{A}_{\times}$ can further be decomposed into the connection vectors $\bm{A}_{ij}$ for node pairs  $(i,j) 
\in \mathcal{I}_{\times}$, where $\mathcal{I}_{\times} = \mathcal{B}_1\cup \mathcal{B}_2 \cup \mathcal{B}_3\cup \mathcal{B}_4$ by definition. One can check that, for each $u \in \{1,2,3,4\}$, the cardinality of the set $\mathcal{B}_u$ is $k\cdot (\frac{n}{2}-k)$.  For any node pair $(i,j) \in \mathcal{B}_u$, the distribution of $\bm{A}_{ij}$ under $\bm{x}_0$ and the distribution of $\bm{A}_{ij}$ under $\bm{x}_k$ are known (by recalling the definition of $\mathcal{B}_u$ described above). Therefore, we can reformulate Eqn.~\eqref{eq:24} as

\begin{align}
&\sum_{a' \in \mathcal{A}_{\times}} \mathbb{P}_{\bm{x}_k}(\bm{A}_{\times} = a')^t \cdot \mathbb{P}_{\bm{x}_0}(\bm{A}_{\times} = a')^{1-t} \notag \\
&=\bigg(\sum_{\bm{d} \in \{0,1 \}^D} q(\bm{d})^t p(\bm{d})^{1-t} \bigg)^{|\mathcal{B}_1|+|\mathcal{B}_4|} \notag \\
&\qquad\qquad\qquad \cdot \bigg(\sum_{\bm{d} \in \{0,1 \}^D} p(\bm{d})^t q(\bm{d})^{1-t} \bigg)^{|\mathcal{B}_2|+|\mathcal{B}_3|} \notag \\
&=\exp\Bigg\{(|\mathcal{B}_1|+|\mathcal{B}_4|)\log\bigg(\sum_{\bm{d} \in \{0,1 \}^D} q(\bm{d})^t p(\bm{d})^{1-t}\bigg) \notag \\
&\qquad\qquad +(|\mathcal{B}_2|+|\mathcal{B}_3|)\log\bigg(\sum_{\bm{d} \in \{0,1 \}^D} p(\bm{d})^t q(\bm{d})^{1-t}\bigg) \Bigg\} \notag 
\end{align}
\begin{align}
&= \exp\left\{-k(n-2k) \cdot D_t(p \| q) \right\},\qquad\qquad\qquad \qquad   \label{eq:guo}
\end{align}
where $D_t(p \| q) := - \log (\sum_{\bm{d} \in \{0,1\}^D} p(\bm{d})^t q(\bm{d})^{1-t}) - \log(\sum_{\bm{d} \in \{0,1\}^D} q(\bm{d})^t p(\bm{d})^{1-t})$. Note that the above derivations are valid for \emph{any} $t\in [0,1]$, thus we can choose the value of $t$ that maximizes $D_t(p \| q)$. In fact, one can show that the maximizer of $D_t(p \| q)$ is  $t= 1/2$, and $D_{1/2}(p \| q)$ is exactly equal to $I(p,q)$.  Combining Eqns.~\eqref{eq:24} and~\eqref{eq:guo} together, we have 
\begin{align}
\mathbb{P}_{\bm{x}_0} (L(\bm{A}|\bm{x}_k) \le L(\bm{A}|\bm{x}_0)) \le \exp\left\{-k(n-2k) \cdot I(p, q) \right\}. \notag
\end{align}
This completes the proof of Lemma~\ref{lemma:err}.

\end{proof}

Referring to Eqns.~\eqref{eq:17}-\eqref{eq:new}, we observe that it is also necessary to compute the number of $\bm{x}' \in \mathcal{\bm{X}}$ satisfying $\text{mis}(\bm{x}',\bm{x}_0) = 2k$.  Using the inequality $\binom{n}{k} \le (en/k)^k$ (where $e$ is the Euler's number), we have that for $k \in [1:n/4]$,
\begin{equation}
\begin{aligned}
\left|\left\{\bm{x}': \operatorname{mis}(\bm{x}', \bm{x}_0)=2k\right\}\right| &= \binom{\frac{n}{2}}{k} \cdot \binom{\frac{n}{2}}{k} \\
&\leq \left(\frac{e \frac{n}{2}}{k}\right)^{k} \cdot \left(\frac{e \frac{n}{2}}{k}\right)^{k} \\
&= \exp\left(2k\left(1 + \log \frac{n}{2k} \right)\right). \label{eq:26}
\end{aligned}
\end{equation}
For $k \in [n/4:n/2-1]$, we have $$\left|\left\{\bm{x}' \!: \operatorname{mis}(\bm{x}', \bm{x}_0) \!=\! 2k\right\}\right| = \left|\left\{\bm{x}' \!: \operatorname{mis}(\bm{x}', \bm{x}_0) \!=\! 2\left(\frac{n}{2} - k \right)\right\}\right|,$$
which  implies that 
\begin{align}
&\sum_{k=1}^{(n/2)-1} \left|\left\{\bm{x}': \operatorname{mis}(\bm{x}', \bm{x}_0)=2k\right\}\right| \cdot \exp\left\{ -k(n-2k) I(p, q)  \right\} \notag  \\
&\le 2 \sum_{k=1}^{n/4} \left|\left\{\bm{x}': \operatorname{mis}(\bm{x}', \bm{x}_0)=2k\right\}\right| \cdot \exp\left\{ -k(n-2k) I(p, q)  \right\}. \label{eq:sym}
\end{align}

Substituting Lemma~\ref{lemma:err} and Eqns.~\eqref{eq:26}-\eqref{eq:sym} into~\eqref{eq:new}, we can upper-bound the error probability $\mathbb{P}(E)$ as
\begin{align}
\mathbb{P}(E) \leq 2\sum_{k=1}^{n/4} \exp\left\{ 2k\left(1+ \log \frac{n}{2k} \right) - 2k\left(\frac{n}{2} - k\right)I(p,q) \right\}. \notag
\end{align}
When the model parameters of the MVSBM$(n,p,q,D)$ satisfy    $\lim_{n \to \infty}\frac{nI(p,q)}{\log n} > 2$, we know  that there exists a constant $\epsilon > 0$ such that $I(p,q) \ge 2(1+\epsilon)\frac{\log n}{n}$. Therefore,



\begin{align}
    \mathbb{P}(E) &\leq 2\sum_{k=1}^{n/4} \exp\left\{ 2k\left( 1 - \log 2k + \log n \right) \right\} \notag \\
&\quad\quad\cdot  \exp\left\{ k\left( -\left(\frac{1}{2} - \frac{k}{n}\right) \cdot 2n \cdot 2(1 + \epsilon) \frac{\log n}{n} \right) \right\} \notag \\ 
&\le 2\sum_{k=1}^{n/4} \exp\left\{ 2k \left( \frac{2k}{n} \log n-\log 2k \right) \right\} \notag \\
    &\quad\quad \cdot \exp\left\{ k \left( 2 -\left( \frac{1}{2} - \frac{k}{n} \right) 4\epsilon \log n   \right) \right\} \label{eq:exp1}
\end{align}

\begin{align}
&\le 2\sum_{k=1}^{n/4} \exp\left\{ 2k \left(1 -\log (2k) + \frac{2k}{n} \log n - \frac{\epsilon}{2}\log n  \right) \right\} \label{eq:exp2} \\
    &= 2n^{-\epsilon k} \cdot \sum_{k=1}^{n/4} \exp\left\{ -2k \left( \log (2k) - \frac{2k}{n} \log n - 1 \right) \right\} \notag \\
&\leq 2n^{-\epsilon} \cdot \sum_{k=1}^{n/4} \exp\left\{ 2k \left( 1 - \frac{1}{3} \log (2k)  \right) \right\}, \label{eq:P_A_E_3}
\end{align}
where Eqn.~\eqref{eq:exp1} holds since \( 2 \log n - 2(1+\epsilon) \log n \leq 0 \), Eqn.~\eqref{eq:exp2} holds since \( \left( \frac{1}{2} - \frac{k}{n} \right) 4 \ge 1 \) when \( 1 \leq k \leq \frac{n}{4} \). Eqn.~\eqref{eq:P_A_E_3} follows from the facts that \( n^{-\epsilon k} \leq n^{-\epsilon} \) and \( \log (2k) - \frac{2k}{n} \log n \geq \frac{1}{3} \log (2k) \).

By noting that there exists a universal constant $C_0 > 0$ such that the term $ 2 \sum_{k=1}^{n/4} \exp\left\{ 2k \left(1 - \frac{1}{3} \log 2k \right) \right\}$ can be upper-bounded by the constant $C_0$, we eventually obtain that


\begin{equation}
    \mathbb{P}(E) \leq n^{-\epsilon} \cdot C_0, \label{eq:P_A_E_conclusion}
\end{equation}
implying that \( \mathbb{P}(E) \to 0 \) as \( n \to \infty \). It means that the ML estimator $\phi_{\text{ML}}$ achieves  exact recovery when the condition $\lim_{n \to \infty}\frac{nI(p,q)}{\log n} > 2$ in Theorem~\ref{T1} holds. This completes the proof of Theorem~\ref{T1}.

\section{Proof of Theorem~\ref{T2}} \label{sec:proof_thm2}
Recall that Theorem~\ref{T2} states that, as long as Eqn.~\eqref{eq:T2} holds, the expected number of misclassified nodes $\mathbb{E}[\varepsilon_{\phi}(n)]$ must be larger than one regardless of the estimator $\phi$ used. To prove Theorem~\ref{T2}, it is equivalent to showing that ``if there exists an algorithm $\phi$ satisfying $\mathbb{E}[\varepsilon_{\phi}(n)] \le 1$, then the parameters of MVSBM must satisfy $\lim_{n \rightarrow \infty} \frac{n I(p,q)}{\log(n)} \ge 2.$''

The proof for Theorem~\ref{T2} uses a change-of-measure technique, which is developed by the prior work~\cite{yun2016optimal} for investigating community detection in the labeled SBM. Let \(\Phi\) be an abbreviation of the true model parameters for generating the node labels $\bm{X}$ and the adjacency tensor $\bm{A}$, such that $$\mathbb{P}_{\Phi}(\cdot) = \underset{(\bm{X}, \bm{A}) \sim \text{MVSBM}(n, p, q, D)  }{\mathbb{P}} (\cdot),$$ 
and let \(\mathbb{E}_{\Phi}[\cdot]\) be the corresponding expectation. Thus, for any estimator $\phi$, its expected number of misclassified nodes $\mathbb{E}[\varepsilon_{\phi}(n)]$ can also be rewritten as $\mathbb{E}_{\Phi}[\varepsilon_{\phi}(n)]$.  A smaller value of \(\mathbb{E}_{\Phi}[\varepsilon_{\phi}(n)]\) indicates a better community recovery performance, as it represents smaller misclassification nodes in expectation.

Let $\mathcal{N} \subseteq [n]$ be the set of nodes that are misclassified. For any node $v \in [n]$, due to the symmetrical structure of the MVSBM, it is clear that 
\begin{align}
\mathbb{E}_{\Phi}[\varepsilon_{\phi}(n)] = n \cdot \mathbb{P}_{\Phi}(v \in \mathcal{N}).
\end{align}

To assist our analyses, we introduce another model $\Psi$ that is different from but correlated with the true model $\Phi$. To be specific, 
\begin{enumerate}
    \item The node labels $\bm{X}=(X(1), X(2), \ldots, X(n))$ generated under $\Psi$ are exactly the same as the node labels generated under $\Phi$.
    \item Under $\Psi$, for any node $v \in [2:n]$, the connection vector $\bm{A}_{1v}$ between nodes $1$ and $v$ follows from distribution $\tilde{p} \in \Delta(\{0,1\}^D)$ if $X(v) = +1$, and $\tilde{q}\in \Delta(\{0,1\}^D)$ if $X(v) = -1$. Here, the two distributions $\tilde{p}$ and $\tilde{q}$ are chosen to satisfy 
    \begin{align}
    D_{\text{KL}}(\tilde{p}\Vert p) + D_{\text{KL}}(\tilde{q}\Vert q) = D_{\text{KL}}(\tilde{p}\Vert q)  + D_{\text{KL}}(\tilde{q}\Vert p), \label{eq:LR}   
    \end{align}
where $D_{\text{KL}}(\cdot \Vert \cdot) $ represents the KL-divergence between two distributions.
    The existence of such $\tilde{p}$ and $\tilde{q}$ can be proved in an analogous fashion as in~\cite[Lemma~7]{yun2016optimal}.

\item For any pair of nodes $(v,v')$ such that $v \ne 1$ and $v' \ne 1$, the connection vector $\bm{A}_{vv'}$ under $\Psi$ is the same as that under~$\Phi$.
\end{enumerate}

Given the adjacency tensor $\bm{A}$, we then introduce the \emph{log-likelihood ratio} $R$ of observing $\bm{A}$ with respect to the models $\Phi$ and $\Psi$:
\begin{align}
 R &:= \log \frac{\mathbb{P}_{\Psi}(\bm{A}) }{\mathbb{P}_{\Phi}(\bm{A})} = \sum_{v=2}^n \log \frac{\mathbb{P}_{\Psi}(\bm{A}_{1v}) }{\mathbb{P}_{\Phi}(\bm{A}_{1v})}. \label{eq:R}
\end{align}
A larger value of \(\mathbb{E}_{\Psi}[R]\) indicates a better fit of the observed graphs to the model \(\Psi\) compared to the true model $\Phi$. By the definitions of $\Phi$ and $\Psi$, one can obtain that:
\begin{itemize}
\item $\mathbb{P}_{\Psi}(\bm{A}_{1v}) =\tilde{p}(\bm{A}_{1v})$ and $\mathbb{P}_{\Phi}(\bm{A}_{1v}) = p(\bm{A}_{1v})$ if the node labels $X(1) = +1$ and $X(v) = +1$.
\item $\mathbb{P}_{\Psi}(\bm{A}_{1v}) =\tilde{p}(\bm{A}_{1v})$ and $\mathbb{P}_{\Phi}(\bm{A}_{1v}) = q(\bm{A}_{1v})$ if the node labels $X(1) = -1$ and $X(v) = +1$.
\item $\mathbb{P}_{\Psi}(\bm{A}_{1v}) = \tilde{q}(\bm{A}_{1v})$ and $\mathbb{P}_{\Phi}(\bm{A}_{1v}) = p(\bm{A}_{1v})$ if the node labels $X(1) = -1$ and $X(v) = -1$.
\item $\mathbb{P}_{\Psi}(\bm{A}_{1v}) = \tilde{q}(\bm{A}_{1v})$ and $\mathbb{P}_{\Phi}(\bm{A}_{1v}) = q(\bm{A}_{1v})$ if the node labels $X(1) = +1$ and $X(v) = -1$.
\end{itemize}

Following the change-of-measure technique developed in~\cite{yun2016optimal}, we relate the expected number of misclassified nodes $\mathbb{E}_{\Phi}[\varepsilon_{\phi}(n)]$ to $\mathbb{P}_{\Psi}(R \le f(n))$ for some function $f(n)$.

\begin{lemma} \label{lemma:change}
    For any function $f(n)$, we have
$$\mathbb{P}_{\Psi}(R \le f(n)) \le e^{f(n)}\frac{\mathbb{E}_{\Phi}[\varepsilon_{\phi}(n)]}{n} + \frac{1}{2}.$$
\end{lemma}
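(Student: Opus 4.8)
The plan is to use the likelihood ratio $e^{R}$ from Eqn.~\eqref{eq:R} to transport an ``error event'' from the true model $\Phi$ to the perturbed model $\Psi$. Let $\mathcal N = \mathcal N(\bm X,\bm A)\subseteq[n]$ be the set of nodes misclassified by the (fixed) estimator $\phi$, and take as the test event $\mathcal E:=\{1\in\mathcal N\}$, i.e.\ node $1$ is misclassified. First I would split the quantity of interest according to $\mathcal E$:
\[
\mathbb P_{\Psi}\!\left(R\le f(n)\right)=\mathbb P_{\Psi}\!\left(R\le f(n),\,1\in\mathcal N\right)+\mathbb P_{\Psi}\!\left(R\le f(n),\,1\notin\mathcal N\right),
\]
and bound the two summands by different arguments.

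For the first summand I would change measure from $\Psi$ back to $\Phi$. Because $\Phi$ and $\Psi$ differ only in the conditional law of the edges incident to node $1$, the ratio $e^{R}$ in Eqn.~\eqref{eq:R} is exactly $\mathrm d\mathbb P_{\Psi}/\mathrm d\mathbb P_{\Phi}$ for the joint law of $(\bm X,\bm A)$; hence on the event $\{R\le f(n)\}$ we have $\mathbb P_{\Psi}(\,\cdot\,)\le e^{f(n)}\,\mathbb P_{\Phi}(\,\cdot\,)$, so that
\[
\mathbb P_{\Psi}\!\left(R\le f(n),\,1\in\mathcal N\right)=\mathbb E_{\Phi}\!\left[e^{R}\,\mathbf 1\{R\le f(n),\,1\in\mathcal N\}\right]\le e^{f(n)}\,\mathbb P_{\Phi}\!\left(1\in\mathcal N\right).
\]
By the node-exchangeability of the MVSBM already recorded before the lemma (namely $\mathbb E_{\Phi}[\varepsilon_{\phi}(n)]=n\,\mathbb P_{\Phi}(v\in\mathcal N)$ for every $v$), taking $v=1$ gives $\mathbb P_{\Phi}(1\in\mathcal N)=\mathbb E_{\Phi}[\varepsilon_{\phi}(n)]/n$, which yields the first term of the claimed bound.

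For the second summand I would drop the constraint $R\le f(n)$ and bound $\mathbb P_{\Psi}(R\le f(n),\,1\notin\mathcal N)\le\mathbb P_{\Psi}(1\notin\mathcal N)$, and then show $\mathbb P_{\Psi}(1\notin\mathcal N)\le\tfrac12$. The crucial structural feature is that, by the very construction of $\Psi$, the label $X(1)$ enters \emph{no} entry of $\bm A$: edges incident to node $1$ are drawn from $\tilde p$ or $\tilde q$ according to the \emph{other} endpoint's label, and all remaining edges ignore node $1$. Consequently, from $\bm A$ alone no estimator can decide node $1$'s side of the partition better than a fair coin, which should force $\mathbb P_{\Psi}(1\notin\mathcal N)\le1/2$. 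Turning this intuition into the exact inequality is the step I expect to be the main obstacle, because one must deal with (a) the balanced-partition constraint, which makes $X(1)$ a deterministic function of $X(2),\dots,X(n)$ rather than an independent fair coin, and (b) the global sign ambiguity built into the definition of $\mathcal N$; I would resolve both through a symmetrization/pairing argument over the node labels, in the spirit of the change-of-measure analysis of the labeled SBM in~\cite{yun2016optimal}. Adding the two bounds then gives $\mathbb P_{\Psi}(R\le f(n))\le e^{f(n)}\,\mathbb E_{\Phi}[\varepsilon_{\phi}(n)]/n+\tfrac12$, which is exactly the assertion of Lemma~\ref{lemma:change}.
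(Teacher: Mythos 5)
Your proposal follows the paper's own argument essentially step for step: the same split on the event $\{1\in\mathcal N\}$, the same change of measure via the likelihood ratio $e^{R}$ combined with $\mathbb P_{\Phi}(1\in\mathcal N)=\mathbb E_{\Phi}[\varepsilon_{\phi}(n)]/n$ for the first term, and the same symmetry argument (that under $\Psi$ the data are independent of node $1$'s label, so $\mathbb P_{\Psi}\{1\in\widehat C_{+}\mid 1\in C_{+}\}=\mathbb P_{\Psi}\{1\in\widehat C_{+}\mid 1\in C_{-}\}$, forcing $\mathbb P_{\Psi}(1\notin\mathcal N)=\tfrac12$) for the second. The one step you flag as a potential obstacle is exactly the step the paper also treats briefly, so the proposal is correct and matches the paper's proof.
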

\begin{proof}[Proof of Lemma~\ref{lemma:change}]
The proof relies on a change-of-measure technique that  translates the measure from $\mathbb{P}_{\Psi}(\cdot)$ to $\mathbb{P}_{\Phi}(\cdot)$. We refer the readers to Section~\ref{sec:lemma2} for the detailed proofs.
\end{proof}

Specializing $f(n) = \log \left( \frac{n}{\mathbb{E}_{\Phi}\left[\varepsilon_{\tau}(n)\right]} \right) - 2\log 2$ yields that
\begin{equation}
\mathbb{P}_{\Psi}\left\{R \leq \log \left(\frac{n}{\mathbb{E}_{\Phi}\left[\varepsilon_{\tau}(n))\right]}\right) - 2\log 2\right\} \leq \frac{3}{4}.
\label{eq:P_phi6}
\end{equation}

 Next, we apply the Chebyshev's inequality to analyze \( \mathbb{P}_{\Psi}\left\{R \leq f(n)\right\} \), which yields that
\begin{equation}
\mathbb{P}_{\Psi}\left\{R \leq \mathbb{E}_{\Psi}[R] +  \sqrt{4\mathbb{E}_{\Psi}\left[\left(R - \mathbb{E}_{\Psi}[R]\right)^2\right]}\right\} \geq \frac{3}{4}.
\label{eq:P_phi7}
\end{equation}
Eqns.~\eqref{eq:P_phi6} an~\eqref{eq:P_phi7} together imply that 
\begin{align}
&\log \left( \frac{n}{\mathbb{E}_{\Phi}[\varepsilon_{\phi}(n)]} \right) - 2\log 2 \notag \\
&\qquad\qquad\qquad \leq \mathbb{E}_{\Psi}[R] + \sqrt{ 4\mathbb{E}_{\Psi}\left[\left(R - \mathbb{E}_{\Psi}[R]\right)^2\right]}.
\end{align}
Thus, for any estimator $\phi$, if  \( \mathbb{E}_{\Phi}\left[\bar{\varepsilon}_{\tau}(n)\right] \leq 1 \), the following condition must hold:
\begin{equation}
\log \left( n \right) - 2\log 2 \leq \mathbb{E}_{\Psi}[R] + \sqrt{ 4\mathbb{E}_{\Psi}\left[\left(R - \mathbb{E}_{\Psi}[R]\right)^2\right]}.
\label{eq:nece_con} 
\end{equation}
It then remains to evaluate $\mathbb{E}_{\Psi}[R]$ and $\mathbb{E}_{\Psi}[\left(R - \mathbb{E}_{\Psi}[R]\right)^2]$.

\subsubsection{Bounding $\mathbb{E}_{\Psi}[R]$}
Recall from Section~\ref{sec:proof_thm1} that we defined $\bm{x}_0 = \{+1, \ldots, +1, -1, \ldots, -1\} \in \mathcal{X}$ as a special node label vector. Let $-\bm{x}_0 = \{-1, \ldots, -1, +1, \ldots, +1\}$. Note that 
\begin{align}
\mathbb{E}_{\Psi}[R] &= \frac{1}{2} \mathbb{E}_{\Psi}[R|X(1) = +1] + \frac{1}{2} \mathbb{E}_{\Psi}[R|X(1) = -1]   \label{eq:sym} \\
&=\frac{1}{2} \mathbb{E}_{\Psi}[R|\bm{X} = \bm{x}_0] +\frac{1}{2}\mathbb{E}_{\Psi}[R|\bm{X} = -\bm{x}_0], \label{eq:sym2}
\end{align}
where~\eqref{eq:sym2} is due to the fact that  the expectation of $R$ conditioned on every $\bm{x} \in \{\bm{x}' \in \mathcal{X}: x'(1) = +1\}$ is the same, and the expectation of $R$ conditioned on every $\bm{x} \in \{\bm{x}' \in \mathcal{X}: x'(1) = -1\}$ is also the same. For the first term in~\eqref{eq:sym}, based on the definition of $R$ as well as the distributions of $\bm{A}_{1v}$ under both $\Psi$ and $\Phi$, we have
\begin{align}
&\mathbb{E}_{\Psi}[R|\bm{X} = \bm{x}_0] \label{eq:kai1}\\
&=\sum_{v=2}^{n/2}\mathbb{E}_{\tilde{p}}\left[\frac{\tilde{p}(\bm{A}_{1v})}{p(\bm{A}_{1v})} \right] + \sum_{v=n/2+1}^{n}\mathbb{E}_{\tilde{q}}\left[\frac{\tilde{q}(\bm{A}_{1v})}{q(\bm{A}_{1v})} \right], \label{eq:kai2}
\end{align}
which implies that
\begin{align}
&\frac{n-1}{2} \left[ D_{\text{KL}}(\tilde{p}\Vert p) + D_{\text{KL}}(\tilde{q}\Vert q) \right] \notag \\
 &\le   \mathbb{E}_{\Psi}[R|\bm{X} = \bm{x}_0] \le \frac{n}{2} \left[ D_{\text{KL}}(\tilde{p}\Vert p) + D_{\text{KL}}(\tilde{q}\Vert q) \right].
\end{align}
Similarly, for the second term, we have
\begin{align}
&\frac{n-1}{2} \left[ D_{\text{KL}}(\tilde{q}\Vert p) + D_{\text{KL}}(\tilde{p}\Vert q) \right] \notag \\
 &\le \mathbb{E}_{\Psi}[R|\bm{X} = -\bm{x}_0] \le  \frac{n}{2} \left[ D_{\text{KL}}(\tilde{q}\Vert p) + D_{\text{KL}}(\tilde{p}\Vert q) \right].   
\end{align}
By Eqn.~\eqref{eq:LR}, we eventually obtain that
\begin{align}
    &\frac{n-1}{2} \left[ D_{\text{KL}}(\tilde{p}\Vert p) + D_{\text{KL}}(\tilde{q}\Vert q) \right] \notag \\
    &\qquad\quad \le \mathbb{E}_{\Psi}[R] \le \frac{n}{2} \left[ D_{\text{KL}}(\tilde{p}\Vert p) + D_{\text{KL}}(\tilde{q}\Vert q) \right].
\end{align} 

\subsubsection{Upper-bounding $\mathbb{E}_{\Psi}[\left(R - \mathbb{E}_{\Psi}[R]\right)^2]$} First note that 
\begin{align}
    \mathbb{E}_{\Psi}[\left(R - \mathbb{E}_{\Psi}[R]\right)^2] = \mathbb{E}_{\Psi}[R^2] - (\mathbb{E}_{\Psi}[R])^2. 
\end{align}
Since $\mathbb{E}_{\Psi}(R)$ has been analyzed, it remains to analyze $\mathbb{E}_{\Psi}[R^2]$. Similar to~\eqref{eq:sym}-\eqref{eq:sym2}, one can show that
\begin{align}
\mathbb{E}_{\Psi}[R^2] =\frac{1}{2} \mathbb{E}_{\Psi}[R^2|\bm{X} = \bm{x}_0] +\frac{1}{2}\mathbb{E}_{\Psi}[R^2|\bm{X} = -\bm{x}_0].  \label{eq:46}
\end{align}
For the first term in~\eqref{eq:46}, we have
\begin{align}
&\mathbb{E}_{\Psi}[R^2|\bm{X}= \bm{x}_0]  \\
&= \mathbb{E}_{\Psi} \Bigg[\Bigg( \sum_{v=2}^{n/2} \log \frac{\tilde{p}(\bm{A}_{1v})}{p(\bm{A}_{1v})} + \sum_{v= \frac{n}{2}+1}^n \log \frac{\tilde{q}(\bm{A}_{1v})}{q(\bm{A}_{1v})} \Bigg)^2 \Bigg] \\
&= \sum_{v=2}^{n/2} \mathbb{E}_{\Psi} \left[\log \frac{\tilde{p}(\bm{A}_{1v})}{p(\bm{A}_{1v})} \right]^2 + \sum_{v=\frac{n}{2}+1}^{n} \mathbb{E}_{\Psi}\left[\log \frac{\tilde{q}(\bm{A}_{1v})}{q(\bm{A}_{1v})} \right]^2  \notag \\
&+\sum_{v,v'\in [2:\frac{n}{2}]:v\ne v'} \mathbb{E}_{\Psi} \left[\log \frac{\tilde{p}(\bm{A}_{1v})}{p(\bm{A}_{1v})}\right]  \cdot \mathbb{E}_{\Psi} \left[ \log \frac{\tilde{p}(\bm{A}_{1v'})} {p(\bm{A}_{1v'})} \right] \notag \\
&+\sum_{v,v'\in [\frac{n}{2}+1:n]:v\ne v'} \mathbb{E}_{\Psi} \left[\log \frac{\tilde{q}(\bm{A}_{1v})}{q(\bm{A}_{1v})}\right]  \cdot \mathbb{E}_{\Psi} \left[ \log \frac{\tilde{q}(\bm{A}_{1v'})} {q(\bm{A}_{1v'})} \right] \notag \\
&+2 \sum_{v \in [2:\frac{n}{2}]} \sum_{v'\in [\frac{n}{2}+1:n]} \mathbb{E}_{\Psi} \left[\log \frac{\tilde{p}(\bm{A}_{1v})}{p(\bm{A}_{1v})}\right]  \cdot \mathbb{E}_{\Psi} \left[ \log \frac{\tilde{q}(\bm{A}_{1v'})} {q(\bm{A}_{1v'})} \right] \notag \\
&\le \frac{n}{2} \log(\rho) D_{\text{KL}}(\tilde{p}\Vert p) + \frac{n}{2} \log(\rho) D_{\text{KL}}(\tilde{q}\Vert q) \notag \\
&\qquad\qquad\qquad +  \frac{n^2}{4} [D_{\text{KL}}(\tilde{p}\Vert p) + D_{\text{KL}}(\tilde{q}\Vert q) ]^2,  \label{eq:49}
\end{align}
where the last inequality is due to assumption (A1). Following a similar procedure, one can also upper-bound the second term of~\eqref{eq:46} by the RHS of~\eqref{eq:49}.   Therefore, we have 
\begin{align}
\mathbb{E}_{\Psi}[\left(R - \mathbb{E}_{\Psi}[R]\right)^2] &\le  \frac{n}{2} \log(\rho)[D_{\text{KL}}(\tilde{p}\Vert p) + D_{\text{KL}}(\tilde{q}\Vert q)] \notag \\
&+\frac{2n-1}{4}[D_{\text{KL}}(\tilde{p}\Vert p) + D_{\text{KL}}(\tilde{q}\Vert q) ]^2.
\end{align}

Moreover, it can be shown that 
    \begin{align}
    &\lim_{n \to \infty} \frac{D_{\text{KL}}(\tilde{p}\Vert p) + D_{\text{KL}}(\tilde{q}\Vert q)}{I(p,q)}= 1.
    \end{align}
Thus, by substituting the upper bounds of $\mathbb{E}_{\Psi}(R)$ and $\mathbb{E}_{\Psi}[\left(R - \mathbb{E}_{\Psi}[R]\right)^2]$ to Eqn.~\eqref{eq:nece_con} as well as considering sufficiently large $n$,  we obtain the necessary condition
\[
\lim_{n \rightarrow \infty} \frac{n I(p,q)}{\log(n)} > 2.
\]
This completes the proof of Theorem~\ref{T2}.

\section{Proof of Corollaries and Lemma~\ref{lemma:change}}
In this section, we prove the corollaries of the MVSBM addressed in Section ~\ref{sec:Corollaries}. And the detailed proof of Lemma~\ref{lemma:change} discussed in Section~\ref{sec:proof_thm2} is also provided.
\subsection{Proof of the Corollaries} \label{sec:proof_Corollaries}
\subsubsection{Proof of Corollary~\ref{cor1}}
We consider Special Case~1 in which only one graph is observed and the distributions $p$ and $q$ are both Bernoulli distributions. Note that by our assumption, both $p(1) = o(1)$ and $q(1) = o(1)$ hold.  In this case, the divergence $I(p,q)$ can be reformulated as

\begin{align}
    &I(p,q) \notag\\
    &= -2 \log \left[ [p(0)q(0)]^{1/2} + [p(1)q(1)]^{1/2} \right]
    \label{eq:fine} \\
    &= -2 \log \left[ [(1-p(1))(1-q(1))]^{1/2} + [p(1)q(1)]^{1/2} \right] \label{eq:cor1} \\
    &= -2 \log \left[ [1 \!-\! p(1) \!-\! q(1) \!+\! p(1)q(1)]^{1/2} + [p(1)q(1)]^{1/2} \right]  \\
    &= -2 \log \left[1 - \frac{p(1)+q(1)}{2} + [p(1)q(1)]^{1/2} + O(p(1)q(1)) \right] \label{eq:taylor1} \\
    &= p(1)+q(1) - 2[p(1)q(1)]^{1/2} + O(p(1)q(1)) \label{eq:taylor2} \\
    &= (p(1)^{1/2} - q(1)^{1/2})^2 + O(p(1)q(1)), \label{eq:fine2}
\end{align}
where~\eqref{eq:taylor1} follows from the fact \( (1 - x)^{1/2} = 1 - \frac{x}{2} + O(x^2) \) for sufficiently small $x$, and~\eqref{eq:taylor2} follows from the Taylor series expansion $\log(1 - x) = -x + O(x^2)$.

Therefore, the threshold in~\eqref{eq:thre} becomes 
\begin{align}
\lim_{n \to \infty}\frac{n [p(1)^{1/2} - q(1)^{1/2}]^2}{\log n} = 2.
\end{align}

\subsubsection{Proof of Corollary~\ref{cor2}}

When the two distributions $p$ and $q$ satisfy~\eqref{eq:sp1} and~\eqref{eq:sp2}  respectively, we note that
\begin{align}
    I(p,q) = -2 \log\left[ (\alpha \beta)^{1/2} + [(1-\alpha)(1-\beta)]^{1/2}\right],
\end{align}
which is identical to~\eqref{eq:cor1}. Thus, the proof of Corollary~\ref{cor2} is the same as that of Corollary~\ref{cor1}.

\subsubsection{Proof of Corollary~\ref{cor3}}

When the two distributions respectively satisfy  $p(\bm{d}) = \prod_{i=1}^{D} p_i(d_i)$ and $q(\bm{d}) = \prod_{i=1}^{D} q_i(d_i)$, we have
\begin{align}
    I(p,q) &= -2 \log \left[ \sum_{\bm{d} \in \{0,1\}^D} [p(\bm{d}) q(\bm{d})]^{1/2} \right] \\
    &= -2 \log \left[ \sum_{\bm{d} \in \{0,1\}^D} \prod_{i=1}^D [p_i(d_i) q_i(d_i)]^{1/2} \right] \\
    &=  -2 \log \left[ \prod_{i=1}^D \sum_{d_i \in \{0,1\} } [p_i(d_i) q_i(d_i)]^{1/2} \right] \\
    &= \sum_{i=1}^D -2 \log \left[ \sum_{d_i \in \{0,1\} } [p_i(d_i) q_i(d_i)]^{1/2} \right] \\
    &= \sum_{i=1}^D -2 \log \left[ [p_i(0) q_i(0)]^{1/2} + [p_i(1) q_i(1)]^{1/2} \right].
\end{align}
Due to the definition of $\bar{p}$ and the assumption $\bar{p} = o(1)$, we have $p_i(1) = o(1)$ and $q_i(1) = o(1)$ for all $i \in [D]$. Thus, following the derivations from~\eqref{eq:fine}-\eqref{eq:fine2}, we have 
\begin{align}
    I(p,q) &= \sum_{i=1}^D  [p_i(1)^{1/2} - q_i(1)^{1/2}]^2 \notag \\
    &\qquad\quad+ O\left(\max_{i\in [D]} (p_i(1)+ q_i(1))^2 \right).
\end{align}

Therefore, the threshold in~\eqref{eq:thre} becomes 
\begin{align}
\lim_{n \to \infty}\frac{n \sum_{i=1}^D [p_i(1)^{1/2} - q_i(1)^{1/2}]^2}{\log n} = 2.
\end{align}

\subsection{Proof of Lemma~\ref{lemma:change}}\label{sec:lemma2}
We aim to establish the relationship between \( \mathbb{E}_{\Phi}[\varepsilon_{\phi}(n)] \) and the distribution of \( R \) under \( \mathbb{P}_{\Psi} \). Specifically, for any function \( f(n) \), we have
\begin{align}
    &\mathbb{P}_{\Psi}\{R \leq f(n)\} \notag  \\
    &= \mathbb{P}_{\Psi}\{R \leq f(n), 1 \in \mathcal{N}\}  + \mathbb{P}_{\Psi}\{R \leq f(n), 1 \notin \mathcal{N} \}.
    \label{eq:PR}
\end{align}
Here, we decompose the probability $\mathbb{P}_{\Psi}\{R \leq f(n)\}$ into two terms. The first term corresponds to the event where node $1$ belongs to the set $\mathcal{N}$, and the second term corresponds to the complementary event where node $1$ does not belong to $\mathcal{N}$. 
Next, we examine the two terms \(\mathbb{P}_{\Psi}\{R \leq f(n), 1 \in \mathcal{N}\}\) and \(\mathbb{P}_{\Psi}\{R \leq f(n), 1 \notin \mathcal{N}\}\) separately.

\subsubsection{Case 1 (when node $1$ belongs to $\mathcal{N}$)}
Based on the definition of $R$, we have
\begin{align}
&\mathbb{P}_{\Psi} (R \le f(n), 1 \in \mathcal{N}) \notag \\
&=\mathbb{P}_{\Psi}\left( \sum_{v=2}^n \log \frac{\mathbb{P}_{\Psi}(\bm{A}_{1v}) }{\mathbb{P}_{\Phi}(\bm{A}_{1v})} \le f(n), 1 \in \mathcal{N} \right) \nonumber \\
&= \sum_{ \{\bm{d}_{1v} \in \{0,1\}^D\}_{v=2}^n } \left(  \prod_{v =2}^n \mathbb{P}_{\Psi} (\bm{A}_{1v} = \bm{d}_{1v}) \right) \notag \\
&\qquad\quad\times \bm{1}\left\{ \prod_{v=2}^{n} \frac{ \mathbb{P}_{\Psi} (\bm{A}_{1v} = \bm{d}_{1v})}{ \mathbb{P}_{\Phi} (\bm{A}_{1v} = \bm{d}_{1v})} \leq \mathrm{e}^{f(n)}, 1 \in \mathcal{N} \right\}  \nonumber \\
&\le \sum_{ \{\bm{d}_{1v} \in \{0,1\}^D\}_{v=2}^n } e^{f(n)} \left(  \prod_{v =2}^n \mathbb{P}_{\Phi} (\bm{A}_{1v} = \bm{d}_{1v}) \right) \notag \\
&\qquad\quad\times \bm{1}\left\{ \prod_{v=2}^{n} \frac{ \mathbb{P}_{\Psi} (\bm{A}_{1v} = \bm{d}_{1v})}{ \mathbb{P}_{\Phi} (\bm{A}_{1v} = \bm{d}_{1v})} \leq \mathrm{e}^{f(n)}, 1 \in \mathcal{N}  \right\}  \label{eq:measure} \\
&= e^{f(n)} \cdot \mathbb{P}_{\Phi} (R \le f(n), 1 \in \mathcal{N}),
\label{eq:P_phi1}
\end{align}
where in~\eqref{eq:measure} we change the measure from $\mathbb{P}_{\Psi}(\cdot)$ to $\mathbb{P}_{\Phi}(\cdot)$.
By noting that 
\begin{align}
\mathbb{P}_{\Phi} (R \le f(n), 1 \in \mathcal{N}) &\le \mathbb{P}_{\Phi} (1 \in \mathcal{N}) = \frac{\mathbb{E}_{\Phi}(\varepsilon_{\phi}(n))}{n},
\end{align}
we obtain 
\begin{align}
\mathbb{P}_{\Psi} (R \le f(n), 1 \in \mathcal{N}) \le e^{f(n)} \frac{\mathbb{E}_{\Phi}(\varepsilon_{\phi}(n))}{n}.
\end{align}

\subsubsection{Case 2 (when node $1$ does not belong to $\mathcal{N}$)} Recall that the two communities are denoted by $C_+$ and $C_-$. Let $\widehat{C}_+$ and $\widehat{C}_-$ be the estimated communities. 
We evaluate \(\mathbb{P}_{\Psi}\left\{R \leq f(n), 1 \notin \mathcal{N} \right\}\) as follows:
\begin{align}
& \mathbb{P}_{\Psi}\left\{R \leq f(n), 1 \notin \mathcal{N} \right\} \notag \\ 
& \leq \mathbb{P}_{\Psi}\left\{1 \notin \mathcal{N} \right\} \notag \\ 
& =\frac{1}{2} \mathbb{P}_{\Psi}\{1  \in \widehat{C}_{+} \mid  1 \in C_+ \}+\frac{1}{2} \mathbb{P}_{\Psi} \{ 1 \in \widehat{C}_{-} \mid 1 \in C_- \} \notag \\ 
&=\frac{1}{2} \mathbb{P}_{\Psi}\{1  \in \widehat{C}_{+} \mid  1 \in C_+ \}+\frac{1}{2}  \left( 1 \!-\! \mathbb{P}_{\Psi}\{ 1 \in \widehat{C}_{+} \! \mid 1 \in C_- \} \right) \notag \\
& = \frac{1}{2}. \notag
\end{align}
The last equation follows from the fact that $\mathbb{P}_{\Psi}\{1 \in \widehat{C}_{+} \mid 1 \in C_+ \}=\mathbb{P}_{\Psi}\{1 \in \widehat{C}_{+} \mid 1 \in C_- \}$, which is due to the definition of the model $\Psi$.

Finally, by combining Case 1 and Case 2, we conclude that
\begin{equation}
\mathbb{P}_{\Psi}\{R \leq f(n)\} \leq \mathrm{e}^{f(n)} \cdot \frac{\mathbb{E}_{\Phi}\left[\varepsilon_{\phi}(n)\right]}{n}+\frac{1}{2}.
\label{eq:P_phi5}
\end{equation}
This completes the proof of Lemma~\ref{lemma:change}.

\section{Conclusion}
 We introduce the multi-view stochastic block model (MVSBM) for generating multiple potentially correlated graphs on the same set of nodes, while most previous studies focus on the problem setting of a single graph. Technically, we study community detection in the MVSBM from an information-theoretic perspective. We first propose using the ML estimator to detect the hidden communities and then provide a performance bound for the ML estimator. Subsequently, we establish a lower bound (on the expected number of misclassified nodes) that serves as an impossibility result for any estimator. Notably, we show that this lower bound coincides with the aforementioned performance bound, allowing us to reveal a sharp threshold for community detection in the MVSBM. Moreover, we highlight that our results can be applied to various specific cases of the MVSBM, offering valuable insights for many practical applications.

In the future, it will be interesting to expand our theory to more general settings and some variations of the MVSBM, such as MVSBM with overlapping communities, weighted or labeled MVSBMs, MVSBM with side information, etc. Additionally, exploring the application of MVSBM to different types of data, such as text, images, or biological networks, could provide valuable insights and further validate the versatility and effectiveness of the model.

\vspace{10pt}

\bibliographystyle{IEEEtran}

\bibliography{MVSBM}

\begin{thebibliography}{10}
\providecommand{\url}[1]{#1}
\csname url@samestyle\endcsname
\providecommand{\newblock}{\relax}
\providecommand{\bibinfo}[2]{#2}
\providecommand{\BIBentrySTDinterwordspacing}{\spaceskip=0pt\relax}
\providecommand{\BIBentryALTinterwordstretchfactor}{4}
\providecommand{\BIBentryALTinterwordspacing}{\spaceskip=\fontdimen2\font plus
\BIBentryALTinterwordstretchfactor\fontdimen3\font minus \fontdimen4\font\relax}
\providecommand{\BIBforeignlanguage}[2]{{%
\expandafter\ifx\csname l@#1\endcsname\relax
\typeout{** WARNING: IEEEtran.bst: No hyphenation pattern has been}%
\typeout{** loaded for the language `#1'. Using the pattern for}%
\typeout{** the default language instead.}%
\else
\language=\csname l@#1\endcsname
\fi
#2}}
\providecommand{\BIBdecl}{\relax}
\BIBdecl

\bibitem{newman2003structure}
M.~E. Newman, ``The structure and function of complex networks,'' \emph{SIAM review}, vol.~45, no.~2, pp. 167--256, 2003.

\bibitem{gao2023multilayer}
C.~Gao, Z.~Yin, Z.~Wang, X.~Li, and X.~Li, ``Multilayer network community detection: A novel multi-objective evolutionary algorithm based on consensus prior information,'' \emph{IEEE Computational Intelligence Magazine}, vol.~18, no.~2, pp. 46--59, 2023.

\bibitem{Lelarge2015Labelled}
M.~Lelarge, L.~Massoulié, and J.~Xu, ``Reconstruction in the labelled stochastic block model,'' \emph{IEEE Transactions on Network Science and Engineering}, vol.~2, no.~4, pp. 152--163, 2015.

\bibitem{ahn2018binary}
K.~Ahn, K.~Lee, H.~Cha, and C.~Suh, ``Binary rating estimation with graph side information,'' \emph{Advances in Neural Information Processing Systems (NeurIPS)}, vol.~31, 2018.

\bibitem{zhang2021community}
Q.~Zhang, V.~Y.~F. Tan, and C.~Suh, ``Community detection and matrix completion with social and item similarity graphs,'' \emph{IEEE Transactions on Signal Processing}, vol.~69, pp. 917--931, 2021.

\bibitem{zhang2022mc2g}
Q.~Zhang, G.~Suh, C.~Suh, and V.~Y.~F. Tan, ``Mc2g: An efficient algorithm for matrix completion with social and item similarity graphs,'' \emph{IEEE Transactions on Signal Processing}, vol.~70, pp. 2681--2697, 2022.

\bibitem{sporns2014contributions}
O.~Sporns, ``Contributions and challenges for network models in cognitive neuroscience,'' \emph{Nature Neuroscience}, vol.~17, no.~5, pp. 652--660, 2014.

\bibitem{newman2004detecting}
M.~E. Newman, ``Detecting community structure in networks,'' \emph{The European Physical Journal B}, vol.~38, pp. 321--330, 2004.

\bibitem{HOLLAND1983109}
\BIBentryALTinterwordspacing
P.~W. Holland, K.~B. Laskey, and S.~Leinhardt, ``Stochastic blockmodels: First steps,'' \emph{Social Networks}, vol.~5, no.~2, pp. 109--137, 1983. [Online]. Available: \url{https://www.sciencedirect.com/science/article/pii/0378873383900217}
\BIBentrySTDinterwordspacing

\bibitem{abbe2015exact}
E.~Abbe, A.~S. Bandeira, and G.~Hall, ``Exact recovery in the stochastic block model,'' \emph{IEEE Transactions on Information Theory}, vol.~62, no.~1, pp. 471--487, 2015.

\bibitem{abbe2015community}
E.~Abbe and C.~Sandon, ``Community detection in general stochastic block models: Fundamental limits and efficient algorithms for recovery,'' in \emph{IEEE Symposium on Foundations of Computer Science (FOCS)}, 2015, pp. 670--688.

\bibitem{chin2015stochastic}
P.~Chin, A.~Rao, and V.~Vu, ``Stochastic block model and community detection in sparse graphs: A spectral algorithm with optimal rate of recovery,'' in \emph{Conference on Learning Theory (COLT)}, 2015, pp. 391--423.

\bibitem{gao2017achieving}
C.~Gao, Z.~Ma, A.~Y. Zhang, and H.~H. Zhou, ``Achieving optimal misclassification proportion in stochastic block models,'' \emph{The Journal of Machine Learning Research}, vol.~18, no.~1, pp. 1980--2024, 2017.

\bibitem{saad2018community}
H.~Saad and A.~Nosratinia, ``Community detection with side information: Exact recovery under the stochastic block model,'' \emph{IEEE Journal of Selected Topics in Signal Processing}, vol.~12, no.~5, pp. 944--958, 2018.

\bibitem{abbe2020entrywise}
E.~Abbe, J.~Fan, K.~Wang, and Y.~Zhong, ``Entrywise eigenvector analysis of random matrices with low expected rank,'' \emph{The Annals of Statistics}, vol.~48, no.~3, pp. 1452--1474, 2020.

\bibitem{esmaeili2021semidefinite}
M.~Esmaeili, H.~M. Saad, and A.~Nosratinia, ``Semidefinite programming for community detection with side information,'' \emph{IEEE Transactions on Network Science and Engineering}, vol.~8, no.~2, pp. 1957--1973, 2021.

\bibitem{esmaeili2022community}
M.~Esmaeili and A.~Nosratinia, ``Community detection with known, unknown, or partially known auxiliary latent variables,'' \emph{IEEE Transactions on Network Science and Engineering}, vol.~10, no.~1, pp. 286--304, 2022.

\bibitem{zhang2022exact}
Q.~Zhang and V.~Y.~F. Tan, ``Exact recovery in the general hypergraph stochastic block model,'' \emph{IEEE Transactions on Information Theory}, vol.~69, no.~1, pp. 453--471, 2022.

\bibitem{chien2019minimax}
I.~E. Chien, C.-Y. Lin, and I.-H. Wang, ``On the minimax misclassification ratio of hypergraph community detection,'' \emph{IEEE Transactions on Information Theory}, vol.~65, no.~12, pp. 8095--8118, 2019.

\bibitem{sima2021exact}
J.~Sima, F.~Zhao, and S.-L. Huang, ``Exact recovery in the balanced stochastic block model with side information,'' in \emph{Information Theory Workshop (ITW)}, 2021, pp. 1--6.

\bibitem{li2018multiview}
X.~Li, M.~Chen, and Q.~Wang, ``Multiview-based group behavior analysis in optical image sequence,'' \emph{Scientia Sinica Informationis}, vol.~48, no.~9, pp. 1227--1241, 2018.

\bibitem{tu2020voxelpose}
H.~Tu, C.~Wang, and W.~Zeng, ``Voxelpose: Towards multi-camera 3d human pose estimation in wild environment,'' in \emph{European Conference on Computer Vision (ECCV)}, 2020, pp. 197--212.

\bibitem{mossel2015consistency}
E.~Mossel, J.~Neeman, and A.~Sly, ``Consistency thresholds for the planted bisection model,'' in \emph{ACM Symposium on Theory of Computing (STOC)}, 2015, pp. 69--75.

\bibitem{jog2015information}
V.~Jog and P.-L. Loh, ``Information-theoretic bounds for exact recovery in weighted stochastic block models using the renyi divergence,'' \emph{arXiv preprint arXiv:1509.06418}, 2015.

\bibitem{abbe2017community}
E.~Abbe, ``Community detection and stochastic block models: recent developments,'' \emph{The Journal of Machine Learning Research}, vol.~18, no.~1, pp. 6446--6531, 2017.

\bibitem{mossel2015reconstruction}
E.~Mossel, J.~Neeman, and A.~Sly, ``Reconstruction and estimation in the planted partition model,'' \emph{Probability Theory and Related Fields}, vol. 162, pp. 431--461, 2015.

\bibitem{kivela2014multilayer}
M.~Kivel{\"a}, A.~Arenas, M.~Barthelemy, J.~P. Gleeson, Y.~Moreno, and M.~A. Porter, ``Multilayer networks,'' \emph{Journal of Complex Networks}, vol.~2, no.~3, pp. 203--27an application to a 1, 2014.

\bibitem{ma2023community}
Z.~Ma and S.~Nandy, ``Community detection with contextual multilayer networks,'' \emph{IEEE Transactions on Information Theory}, vol.~69, no.~5, pp. 3203--3239, 2023.

\bibitem{chen2022global}
S.~Chen, S.~Liu, and Z.~Ma, ``Global and individualized community detection in inhomogeneous multilayer networks,'' \emph{The Annals of Statistics}, vol.~50, no.~5, pp. 2664--2693, 2022.

\bibitem{stanley2016clustering}
N.~Stanley, S.~Shai, D.~Taylor, and P.~J. Mucha, ``Clustering network layers with the strata multilayer stochastic block model,'' \emph{IEEE Transactions on Network Science and Engineering}, vol.~3, no.~2, pp. 95--105, 2016.

\bibitem{lei2023computational}
J.~Lei, A.~R. Zhang, and Z.~Zhu, ``Computational and statistical thresholds in multi-layer stochastic block models,'' \emph{arXiv preprint arXiv:2311.07773}, 2023.

\bibitem{chatterjee2022clustering}
A.~Chatterjee, S.~Nandy, and R.~Sadhu, ``Clustering network vertices in sparse contextual multilayer networks,'' \emph{arXiv preprint arXiv:2209.07554}, 2022.

\bibitem{yang2011detecting}
T.~Yang, Y.~Chi, S.~Zhu, Y.~Gong, and R.~Jin, ``Detecting communities and their evolutions in dynamic social networks—a bayesian approach,'' \emph{Machine learning}, vol.~82, pp. 157--189, 2011.

\bibitem{gosak2018network}
M.~Gosak, R.~Markovi{\v{c}}, J.~Dolen{\v{s}}ek, M.~S. Rupnik, M.~Marhl, A.~Sto{\v{z}}er, and M.~Perc, ``Network science of biological systems at different scales: A review,'' \emph{Physics of Life Reviews}, vol.~24, pp. 118--135, 2018.

\bibitem{de2017community}
C.~De~Bacco, E.~A. Power, D.~B. Larremore, and C.~Moore, ``Community detection, link prediction, and layer interdependence in multilayer networks,'' \emph{Physical Review E}, vol.~95, no.~4, p. 042317, 2017.

\bibitem{racz2022correlated}
M.~Z. R{\'a}cz and A.~Sridhar, ``Correlated randomly growing graphs,'' \emph{The Annals of Applied Probability}, vol.~32, no.~2, pp. 1058--1111, 2022.

\bibitem{racz2021correlated}
M.~Racz and A.~Sridhar, ``Correlated stochastic block models: Exact graph matching with applications to recovering communities,'' \emph{Advances in Neural Information Processing Systems (NeurIPS)}, vol.~34, pp. 22\,259--22\,273, 2021.

\bibitem{gaudio2022exact}
J.~Gaudio, M.~Z. Racz, and A.~Sridhar, ``Exact community recovery in correlated stochastic block models,'' in \emph{Conference on Learning Theory (COLT)}, 2022, pp. 2183--2241.

\bibitem{onaran2016optimal}
E.~Onaran, S.~Garg, and E.~Erkip, ``Optimal de-anonymization in random graphs with community structure,'' in \emph{Asilomar Conference on Signals, Systems and Computers (ACSSC)}, 2016, pp. 709--713.

\bibitem{yun2016optimal}
S.-Y. Yun and A.~Proutiere, ``Optimal cluster recovery in the labeled stochastic block model,'' \emph{Advances in Neural Information Processing Systems (NeurIPS)}, vol.~29, 2016.

\bibitem{cormen2022introduction}
T.~H. Cormen, C.~E. Leiserson, R.~L. Rivest, and C.~Stein, \emph{Introduction to algorithms}.\hskip 1em plus 0.5em minus 0.4em\relax MIT press, 2022.

\end{thebibliography}




\vfill

\end{document}